\theoremstyle{remark}
\newtheorem{theorem}{Theorem}
\newtheorem{remark}{Remark}
\newtheorem{Proposition}{Proposition}
\newtheorem{corollary}{Corollary}
\newcommand\norm[1]{\left\lVert#1\right\rVert}
\DeclarePairedDelimiter{\abs}{\lvert}{\rvert}
\begin{document}
 \author{
      Jingwen Bai and Ashutosh Sabharwal, \emph{Fellow, IEEE}\footnote{J. Bai and A. Sabharwal are with Department of Electrical and Computer Engineering
      Rice University, Houston, TX 77005, USA, e-mail:\{jingwen, ashu\}@rice.edu. This work was partially supported by NSF CNS-1012921, NSF CNS-1161596 and Intel.}}


\title{Large Antenna Analysis of Multi-Cell Full-Duplex Networks}

\vspace{15pt}

\maketitle
\thispagestyle{plain}
\pagestyle{plain}

\begin{abstract}
We study a multi-cell multi-user MIMO full-duplex network, where each base station~(BS) has multiple antennas with full-duplex capability supporting single-antenna users with either full-duplex or half-duplex radios. We characterize the up- and downlink ergodic achievable rates for the case of linear precoders and receivers. The rate analysis includes practical constraints such as imperfect self-interference cancellation, channel estimation error, training overhead and pilot contamination. We show that the 2$\times$ gain of full-duplex over half-duplex system remains in the asymptotic regime where the number of BS antennas grows infinitely large. We numerically evaluate the finite SNR and antenna performance, which reveals that full-duplex networks can use significantly fewer antennas to achieve spectral efficiency gain over the half-duplex counterparts. 
In addition, the overall full-duplex gains can be achieved under realistic 3GPP multi-cell network settings despite the increased interference introduced in the full-duplex networks.

\end{abstract}
\section{Introduction}
One of the emerging techniques to significantly improve the spectral efficiency in wireless networks is full-duplex wireless communication~\cite{ashu13}.
In-band full-duplex wireless allows simultaneous transmission and reception using the same frequency band, and thus opens up new design opportunities to increase the spectral efficiency of wireless systems. The feasibility of a (near-)full-duplex radio has been demonstrated by many groups, see e.g.\cite{ashu13,duarte2012design,evan13,Bharadia2013,aryafar2012midu,Kim15,HossainH15} and references therein. 
A side-effect of the full-duplex operation is that additional interference is introduced because there are more simultaneous active links and hence there is a possibility that the full-duplex gain can be offset by the loss due to additional interference. In the example shown in 
Fig.~\ref{fig:topology}, the uplink rate will be affected by the new interference from neighboring full-duplex BSs, and downlink rate will be affected by the new interference from uplink users (UE).


\begin{figure}
\centering
   \includegraphics[width=0.65\textwidth,trim = 20mm
      45mm 20mm 23mm, clip]{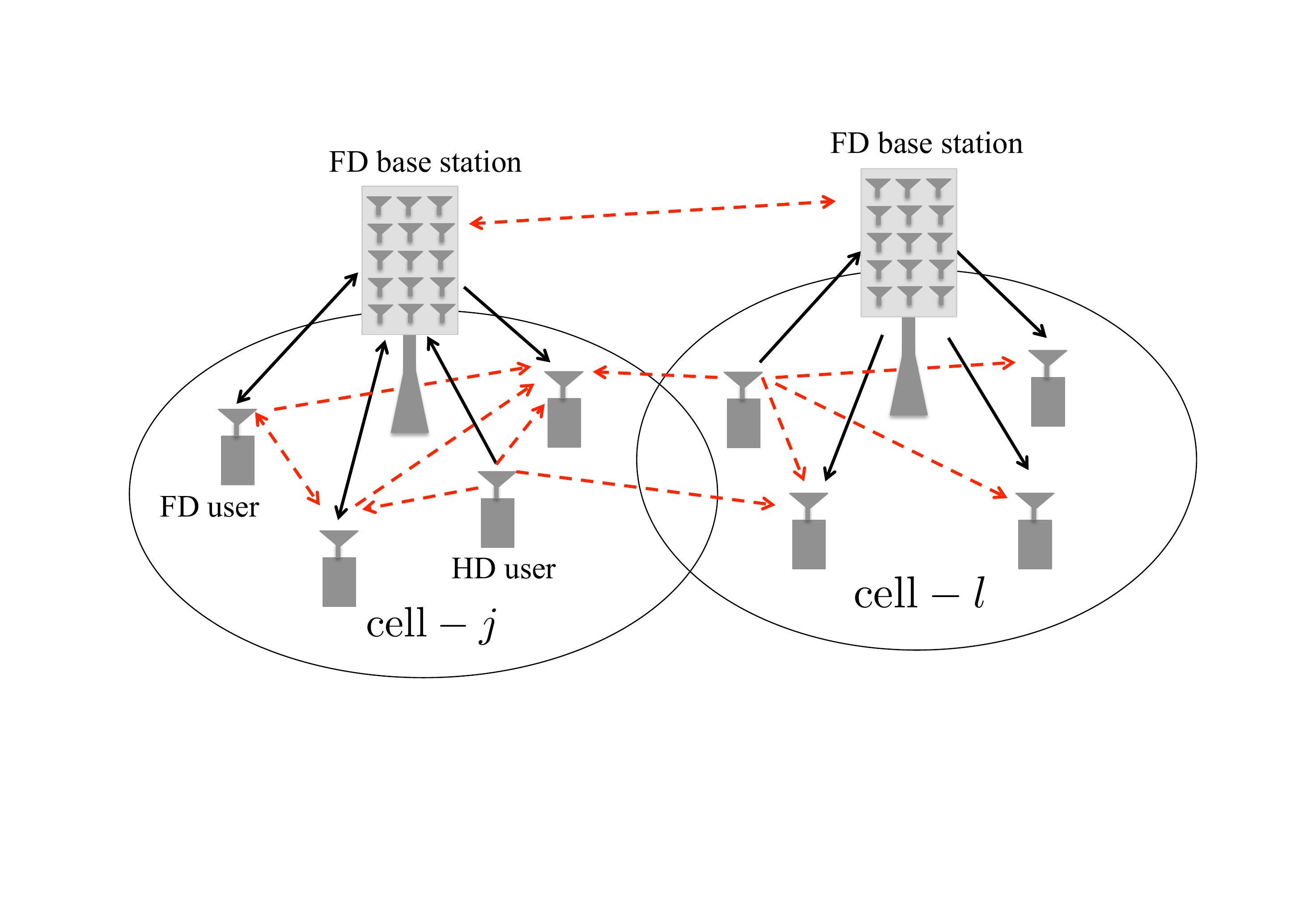}
\caption{The full-duplex BS in each cell has $M$ antennas, and the UE has single antenna with either full-duplex or half-duplex radio. Besides the conventional interference, there will be new BS-BS and UE-UE interference highlighted by the red dash lines.}
\label{fig:topology}
\end{figure}

In this paper, we study if and how large antenna arrays at BSs can be used to manage the increased intra- and inter-cell interference in full-duplex enabled networks. Recently, the use of a very large antenna array at the BS has become very attractive~\cite{Shepard2013,larsson2014massive,rusek2013scaling}, known as massive MIMO,  where a BS has orders of magnitude more antennas compared with the current use. The large antenna array at the BS not only can increase the network capacity many-fold, but also enable a new network architecture to simplify baseband signal processing~\cite{Shepard2013}, eliminate inter-cell interference~\cite{marzettaNon10}, and reduce node transmit power for energy saving~\cite{Ngo13}.  The experimental evidence on the benefits of massive MIMO
 has already sparked strong industry interest and 64-antenna configuration~\cite{3gppMassiveMIMO} is now being considered for 5G systems. 


Our contributions are three-fold. First, we provide a general analysis to characterize the uplink and downlink ergodic achievable rates in multi-cell multi-user MIMO~(MU-MIMO) full-duplex networks. Focusing on computationally efficient linear receivers and precoders, we consider the case where each BS has multiple antennas with full-duplex capability, while each UE has a single antenna with either full-duplex or half-duplex radio. Practical constraints such as imperfect self-interference cancellation, channel estimation error, training overhead and pilot contamination are considered in our analysis. 

Second, we analyze the system performance in the asymptotic regime where the number of BS antennas grows infinitely large. We show that the transmit power of BSs and UEs can be scaled down with an increasing number of BS antennas to maintain a fixed asymptotic rate. The impact of imperfect self-interference cancellation at full-duplex BSs and full-duplex UEs, intra-cell and inter-cell interference in the multi-cell MU-MIMO full-duplex networks disappears as the number of BS antennas becomes infinitely large. Under the assumption of perfect channel knowledge, full-duplex system asymptotically achieves 2$\times$ spectral efficiency gain over the half-duplex system. When channel estimation error and channel training overhead are considered, the 2$\times$ asymptotic full-duplex gain is achieved when serving only full-duplex UEs. 

Lastly, we numerically evaluate the system performance in finite SNR and finite antenna regimes. Our numerical results reveal that full-duplex networks can use significantly fewer antennas to achieve spectral efficiency gain over the half-duplex counterparts. 
In addition, under realistic 3GPP multi-cell network scenarios~\cite{3gppdtdd}, 
the overall full-duplex gains can be achieved despite the increased interference introduced in the full-duplex networks.

\textbf{Related work:} 
There are two lines of work in the area of MU-MIMO full-duplex cellular networks. The first line of work focuses on information theoretic limits while the other line of work focuses on practical network design. Towards that end, we discuss \cite{Sahai13, Bai15, Jeon15, Karakus15, Wenzhuo15, JingwenTWC, Goyal13,Choi13,Amir15}; note that not all papers are multi-cell studies. In \cite{Sahai13, Bai15, Jeon15,Karakus15,Wenzhuo15}, only a single-cell MIMO full-duplex case is studied. 
The authors in \cite{Sahai13, Bai15, Jeon15} study the information theoretic limits of a single-cell MU-MIMO full-duplex network, where the high signal-to-noise~(SNR) approximation, i.e., degrees of freedom or multiplexing gains have been characterized. In \cite{Karakus15,Wenzhuo15}, the authors propose opportunistic scheduling and a distributed power control method to mitigate UE-UE interference in a single-cell MIMO full-duplex network. Authors in \cite{JingwenTWC} present several schemes to manage UE-UE interference via out-of-band wireless  side-channels in the full-duplex network. 

All the papers~\cite{Goyal13,Choi13,Amir15} study multi-cell full-duplex networks. In \cite{Goyal13}, scheduling and power control algorithms with BS cooperation in the multi-cell SISO full-duplex networks are investigated and \cite{Choi13} studies the MIMO case with 4 BS antennas. The performance analyses are based on extensive simulations, largely because of the challenge of analyzing complex scheduling and power control methods. In~\cite{Amir15}, the authors study the degrees of freedom region when the BSs have full coordination. This converts the problem into a network MIMO problem, and essentially allows one to treat the multi-cell problem as one giant MIMO cell. This, in turn, allows the use of interference alignment to achieve the highest possible degrees of freedom. While this approach provides insights into the maximum possible degrees of freedom, it relies on full coordination and proposes a very complex transmission method - both of them are extremely challenging to implement in practice.  Compared to the above works, we focus on the case when there is no BS coordination and hence we cannot convert the problem to the more tractable single-cell problem. In addition, we allow only simple linear processing at the BSs, namely conjugate beamforming, and hence complex schemes like zero-forcing or interference-alignment cannot be employed. 


The rest of paper is organized as follows. We first describe the system model in Section~\ref{system}. In Section~\ref{core}, we characterize the ergodic achievable rates of uplink and downlink under both perfect and imperfect CSI assumptions. The large-scale system performance is studied in Section~\ref{asym}. The numerical results with realistic network evaluation are presented in Section~\ref{numerical}. Section~\ref{conclusion} concludes the paper.

\section{Multi-cell Full-Duplex System Model}\label{system}

We consider a  multi-cell MU-MIMO full-duplex system with $L~(\geq 1)$ cells, where each cell has one in-band full-duplex BS with $M$ antennas. In each cell, single-antenna UEs with either full-duplex or half-duplex radios are supported. Each BS can serve $K_f$ full-duplex~(FD) users, $K_{h}^{u}$ half-duplex~(HD) uplink users and $K_{h}^d$ half-duplex downlink users. We denote $\mathcal{K}_{u}=\{\underbrace{1,\cdots,K_{f}}_{\mathrm{FD~UE}},\underbrace{K_{f}+1,\cdots,K_f+K_h^u}_{\mathrm{HD~UE}}\}$ as the set of all uplink users,  where the first $K_f$ elements represent $K_f$ full-duplex users; and the last $K_{h}^{u}$ elements represent half-duplex uplink users. The set of all downlink users are denoted as $\mathcal{K}_d=\{\underbrace{1,\cdots,K_{f}}_{\mathrm{FD~UE}},\underbrace{K_{f}+1,\cdots,K_f+K_h^d}_{\mathrm{HD~UE}}\}$, where the first $K_f$ elements represent full-duplex users, and  the last $K_{h}^{d}$ elements represent half-duplex downlink users. We further denote the set of full-duplex users as $\mathcal{K}_f\triangleq \{1,\cdots,K_f\}$ and set of half-duplex downlink users as $\mathcal{K}_h^d\triangleq \{K_f+1,\cdots,K_f+K_h^d\}$. The total number of uplink users is $|\mathcal{K}_{u}|\triangleq K_u$, where $K_u=K_{f}+K_h^u$ and the total number of downlink users is $|\mathcal{K}_{d}|\triangleq K_d$, where $K_d=K_{f}+K_h^d$. 

The uplink and downlink data are transmitted over the same time-frequency slot with block fading. The analysis in this paper can be applied to wide-band channels like OFDM system.

In this work, we will consider practical constraints on the full-duplex radio chains such as non-ideal power amplifier, oscillator phase noise, non-ideal digital-to-analog converter and analog-to-digital converter, which can be captured by the dynamic range model~\cite{Day12,Vehkapera13}. The dynamic range model approximates the imperfect full-duplex transmit radio chain as an additive white Gaussian ``transmitter noise" added to each transmit antenna. The variance of the transmitter noise is $\kappa$~($\kappa\ll 1$) times the power of the transmit signals, where $\kappa$ is the dynamic range parameter.
The full-duplex  transmitter noise will propagate over the self-interference~(SI) channel and become nontrivial compared to the receiver thermal noise. However, the effect of transmitter noise that propagates over the uplink/downlink channels can be ignored compared to the receiver thermal noise~\cite{Vehkapera13}.
\subsection{Uplink}
The $j$-th full-duplex BS will receive an $M\times 1$ uplink signal vector $\bm{y}_{u,j}^\prime$:
\begin{gather}
\begin{aligned}
\bm{y}_{u,j}^\prime=\sum_{l=1}^L\bm{G}_{u,jl} \bm{x}_{u,l}+\sum_{l=1}^{L}\bm{V}_{jl}\bm{x}_{d,l}+\bm{V}_{jj}\bm{e}_{bs,j}+\bm{n}_{u,j},
\label{original:uplink}
\end{aligned}
\end{gather}
where $\bm{x}_{u,l} = \sqrt{P_u}\bm{u}_l$ denotes the uplink transmit signal vector, $\bm{u}_l\triangleq[u_{l,1},\cdots,u_{l,K_u}]^T$ is a $K_u\times 1$ vector consisting of uplink messages of the $K_u$ uplink users in the $l$-th cell. Each user has an average power constraint $P_u$, and $\mathbb{E}(|{u}_{l,k}|^2)=1$, for $k\in\mathcal{K}_u$.
$\bm{x}_{d,l}$ is an $M\times 1$ vector denoting downlink transmit signal in the $l$-th cell. Each BS has an average power constraint $P_d$.

We assume $\bm{G}_{u,jl}$ is an $M\times K_u$ matrix denoting the channel between the uplink users in the $l$-th cell and the $j$-th BS.
The propagation channel model in our system considers both small-scale fading due to mobility and multipath, and large-scale fading due to geometric attenuation and shadowing effect, thus allowing arbitrary cell layout.

The uplink channel $\bm{G}_{u,jl}$ encompasses independent small-scale fading and large-scale fading 
\[(\bm{G}_{u,jl})_{m,n}\triangleq g_{u,jmln}=h_{u,jmln}\sqrt{\beta_{u,jln}},\]
where $j,l\in\{1,\cdots,L\},m\in\{1,\cdots,M\},n\in\mathcal{K}_{u}$. $h_{u,jmln}$ is the small-scale fading value between the $n$-th uplink user in the $l$-th cell and the $m$-th antenna at the $j$-th BS, following independent and identically distributed~(i.i.d.) circularly-symmetric complex Gaussian distribution $\mathcal{CN}(0,1)$. We use ${\beta_{u,jln}}$ to model the path loss and shadow fading between the $n$-th uplink user in the $l$-th cell and the $j$-th BS, which is independent of $m$. Such long-term parameters can be measured at the BS. 

The BS-BS channel $\bm{V}_{j,l}$ is an $M\times M$ matrix denoting the channel between the $l$-th BS and the $j$-th BS, and $\bm{V}_{j,j}$ denotes the SI channel for the $j$-th full-duplex BS. We assume $\bm{V}_{j,l}$ contains i.i.d $\mathcal{CN}(0,\beta_{b,jl})$ elements.
The imperfect full-duplex transmit front-end chain is modeled as transmit noise $\bm{e}_{bs,j}\in\mathbb{C}^{M\times 1}$ added to each transmit antenna at the $j$-th full-duplex BS. And $\bm{e}_{bs,j}$ will propagate through the SI channel and follow~$\mathcal{CN}\left(0,\frac{\kappa P_d}{M}\bm{I}_M\right)$ assuming equal power allocation among downlink users, where $\bm{I}_M$ denotes an $M\times M$ identity matrix.
The receiver thermal noise is denoted as $\bm{n}_{u,j}$ which contains i.i.d $\mathcal{CN}(0,\sigma^2)$ entries. 

The $j$-th BS then performs SI cancellation by knowing its SI channel and its downlink signal. Hence from (\ref{original:uplink}), we have 
\begin{gather}
\begin{aligned}
\bm{y}_{u,j}=\sum_{l=1}^L\bm{G}_{u,jl} \bm{x}_{u,l}+\sum_{l\neq j}^{L}\bm{V}_{j,l}\bm{x}_{d,l}+\bm{z}_{u,j},\label{eq:ul}
\end{aligned}
\end{gather}
where $\bm{z}_{u,j}\sim \mathcal{CN}\left(0,\left(\sigma^2+\kappa P_d\beta_{b,jj}\right)\bm{I}_M\right)$.

\subsection{Downlink}
The received signals for the $K_d$ downlink users in the $l$-th cell can be expressed as a $K_d\times 1$ vector $\bm{y}_{d,l}$, given by
\begin{gather}
\begin{aligned}
\bm{y}_{d,l}=\sum_{j=1}^L\bm{G}^T_{d,jl}\bm{x}_{d,j}+\sum_{j=1}^L\bm{F}_{lj}\bm{x}_{u,j}+\bm{J}_{l}+\bm{n}_{d,l}, \label{eq:dl}
\end{aligned}
\end{gather}
where $\bm{G}_{d,jl}$ is an $M\times K_d$ matrix denoting the channel between the downlink users in the $l$-th cell and the $j$-th BS, the downlink channel can be represented as $\bm{G}_{d,jl}^T$ and we have
\[(\bm{G}_{d,jl})_{m,k}^T\triangleq g^T_{d,jmlk}=h^T_{d,jmlk}\sqrt{\beta_{d,jlk}},\]
 where $k\in\mathcal{K}_{d}$, the superscript $``T"$ denotes transpose. $h_{d,jmlk}$ is the small-scale fading value between the $k$-th downlink user in the $l$-th cell and the $m$-th antenna at the $j$-th BS, following $\mathcal{CN}(0,1)$; ${\beta_{d,jlk}}$ incorporates path loss and shadow fading between the $k$-th downlink user in the $l$-th cell and the $j$-th BS .

The UE-UE interference channel $\bm{F}_{lj}$ is a $K_d\times K_u$ matrix denoting the interference channel from $K_u$ uplink users in the $j$-th to $K_d$ downlink users in the $l$-th cell; $\bm{J}_{l}=
\begin{pmatrix}
    \mathsf{diag}(\bm{S}_{ll})\bm{e}_{ue,l} \\
  \bm{0}
 \end{pmatrix}$ is a $K_d\times 1$ vector, where $\bm{S}_{ll}$ is a $K_f\times K_f$ matrix denoting the interference channels between full-duplex users in the $l$-th cell. The diagonal elements of $\bm{S}_{ll}$ constitute SI channels for each full-duplex user in the $l$-th cell. And $\bm{S}_{ll}$ is a sub-matrix of $\bm{F}_{ll}$, where $ \left(\mathsf{diag}(\bm{S}_{ll})\right)_k=(\bm{F}_{ll})_{k,k}\triangleq f_{lklk},~k\in\mathcal{K}_f$. We assume each entry in $(\bm{F}_{lj})_{k,n}\triangleq f_{lkjn}$ models both the large-scale and fast fading channel coefficient between the $n$-th uplink user in the $j$-th cell and the $k$-th downlink user in the $l$-th cell, and follows i.i.d. $\sim\mathcal{CN}(0,\beta_{I,lkjn})$, where $j,l\in\{1,\cdots,L\}$ and $n\in\mathcal{K}_u,k\in\mathcal{K}_d$. 
The transmit noise at each full-duplex user in the $l$-th cell $\bm{e}_{ue,l}\in\mathbb{C}^{K_f\times 1}$ will propagate through the SI channel and follow~$\mathcal{CN}\left(0,\kappa P_u\bm{I}_{K_f}\right)$.  The receiver thermal noise $\bm{n}_{d,l}$ follows~$\sim\mathcal{CN}(0,\sigma^2\bm{I}_{K_d})$.

Next, the $k$-th full-duplex user in the $l$-th cell where $k\in\mathcal{K}_f$, will perform SI cancellation by subtracting out its own interference from the received signal in (\ref{eq:dl}). Thus we have
\begin{gather}
\begin{aligned}
{r}_{d,lk}=\sum_{j=1}^L\bm{g}^T_{d,jlk}\bm{x}_{d,j}+\sum_{j=1}^L\sum_{n=1}^{K_u}\sqrt{P_u}f_{lkjn}{u}_{j,n}-\sqrt{P_u}f_{lklk}{u}_{l,k}+{z}_{d,lk}, \label{eq:dlfd}
\end{aligned}
\end{gather}
where $\bm{g}_{d,jlk}$ is the $k$-th column of matrix $\bm{G}_{d,jl}$ and ${z}_{d,lk}\sim \mathcal{CN}\left(0,\left(\sigma^2+\kappa P_u\beta_{I,lklk}\right)\right).$ 

The expressions of the received downlink signals for full-duplex users and half-duplex users will differ, because for a full-duplex user, after SI cancellation there is no self UE-UE interference but an additional transmit noise is added to the received signal, while for a half-duplex downlink user, it will suffer the interference from all uplink users. Hence we can rewrite the received downlink signal for the $k^\prime$-th half-duplex user in the $l$-th cell where $k^\prime\in\mathcal{K}_h^d$ as
\begin{gather}
\begin{aligned}
{r}_{d,lk^\prime}=\sum_{j=1}^L\bm{g}^T_{d,jlk^\prime}\bm{x}_{d,j}+\sum_{\mathclap{\substack{j=1}}}^L\sum_{\mathclap{n=1}}^{K_u} \sqrt{P_u}f_{lk^\prime jn}u_{j,n}+{n}_{d,lk^\prime}, \label{eq:dlhd}
\end{aligned}
\end{gather}
where ${n}_{d,lk^\prime}$ is the $k^\prime$-th element in $\bm{n}_{d,l}$.

\section{Achievable rates in full-duplex networks} \label{core}
In this section, we will derive the up- and downlink ergodic achievable rates in multi-cell MU-MIMO full-duplex networks. We first study the case of perfect channel state information~(CSI) where the channel information is obtained perfectly at no cost. Next, we consider the channel estimation error, where CSI is estimated from uplink pilot sequences. We assume synchronized reception from all cells, which will reflect the worst possible scenario of pilot contamination~\cite{marzettaNon10}.  
\subsection{Perfect channel state information}
\subsubsection{Uplink with maximum ratio combining}
The $j$-th BS will receive data transmitted by its associated $K_u$ uplink users, together with the interference from other cells. 
We apply a low-complexity linear receiver, i.e., maximum ratio combing for uplink signal detection.  The $j$-th BS will multiply the received signal after SI cancellation by the conjugate-transpose of its uplink channel $\bm{G}_{u,jj}^H$ to obtain a $K_u\times 1$ signal vector 
\begin{gather}
\begin{aligned}
 \bm{r}_{u,j}&=\bm{G}_{u,jj}^H\bm{y}_{u,j}\\
 &=\bm{G}_{u,jj}^H\bm{G}_{u,jj}\bm{x}_{u,j}+\sum_{\mathclap{l\neq j}}\bm{G}_{u,jj}^H\bm{G}_{u,jl}\bm{x}_{u,l}+\sum_{l\neq j}\bm{G}_{u,jj}^H\bm{V}_{j,l}\bm{x}_{d,l}+\bm{G}_{u,jj}^H\bm{z}_{u,j}, \label{eq:ul2}
\end{aligned}
\end{gather}
where superscript ``$H$" denotes conjugate-transpose, $\bm{y}_{u,j}$ is given in (\ref{eq:ul}).

\subsubsection{Downlink with conjugate beamforming}
The $l$-th BS will transmit an $M\times 1$ downlink signal vector $\bm{x}_{d,j}$ by precoding the downlink messages using a conjugate beamforming linear precoder 
\begin{gather}
\begin{aligned}
\bm{x}_{d,l}=\frac{\bm{G}^*_{d,ll}}{\sqrt{\gamma_l}}\bm{s}_{d,l},\label{precode:p}
\end{aligned}
\end{gather}
where superscript ``*" denotes conjugate. $\bm{s}_{d,l}=\sqrt{\frac{P_d}{K_d}}\bm{d}_l$, 
$\bm{d}_l\triangleq[d_{l,1},\cdots,d_{l,K_d}]^T$ is a $K_d\times 1$ vector consisting of downlink messages of the $K_d$ downlink users in the $l$-th cell with $\mathbb{E}(|d_{l,k}|^2)=1,k\in\mathcal{K}_d$; $\gamma_l$ is the $l$-th cell power normalization factor to meet the average power constraint such that $\mathbb{E}(\bm{x}_{d,l}^H\bm{x}_{d,l})=P_d$, hence $\gamma_l=\frac{\mathbb{E}(\bm{d}_l^H\bm{G}^T_{d,ll}\bm{G}^*_{d,ll}\bm{d}_l)}{K_d}=\frac{{M\sum_{k=1}^{K_d} \beta_{d,llk}}}{K_d}$.

Substituting (\ref{precode:p}) into (\ref{eq:dlfd}), we can first obtain the downlink signal at the $k$-th full-duplex user in the $l$-th cell where $k\in\mathcal{K}_f$ as
\begin{gather}
\begin{aligned}
{r}_{d,lk}=\sum_{j=1}^L\sum_{i=1}^{K_d}\sqrt{\frac{P_d}{K_d\gamma_j}}\bm{g}^T_{d,jlk}\bm{g}^*_{d,jji}{d}_{j,i}+\sum_{j=1}^L\sum_{n=1}^{K_u}\sqrt{P_u}f_{lkjn}{u}_{j,n}-\sqrt{P_u}f_{lklk}{u}_{l,k}+{z}_{d,lk}. \label{eq:dl2}
\end{aligned}
\end{gather}

Next, we substitute (\ref{precode:p}) into (\ref{eq:dlhd}) to obtain the received downlink signal at the $k^\prime$-th half-duplex user in the $l$-th cell where $k^\prime\in\mathcal{K}_h^d$ as
\begin{gather}
\begin{aligned}
{r}_{d,lk^\prime}=\sum_{j=1}^L\sum_{i=1}^{K_d}\sqrt{\frac{P_d}{K_d\gamma_j}}\bm{g}^T_{d,jlk^\prime}\bm{g}^*_{d,jji}{d}_{j,i}+\sum_{\mathclap{\substack{j=1}}}^L\sum_{\mathclap{n=1}}^{K_u} \sqrt{P_u}f_{lk^\prime jn}u_{j,n}+{n}_{d,lk^\prime}. \label{eq:dl3}
\end{aligned}
\end{gather}

\subsubsection{Ergodic achievable rates}
We will treat the interference terms in (\ref{eq:ul2}), (\ref{eq:dl2}) and (\ref{eq:dl3}) as noise in our rate analysis. By coding over infinitely large number of the realizations of the fast ading channels, we can obtain the ergodic achievable rate of the $n$-th uplink user in the $j$-th cell~(bits/s/Hz) as
\begin{gather}
\begin{aligned}
\tilde{R}^{fd,p}_{u,jn}&= \mathbb{E}\left\{\mathrm{log}_2\left(1+\frac{P_u\norm{\bm{g}_{u,jjn}}^4}{P_u\sum_{(l,m)\neq (j,n)}|\bm{g}^H_{u,jjn}\bm{g}_{u,jlm}|^2+I_{bs-bs}+\norm{\bm{g}_{u,jjn}}^2(\sigma^2+\kappa P_d\beta_{b,jj})}\right)\right\}, \label{ul:lb1}
\end{aligned}
\end{gather}
where $n\in\mathcal{K}_u$, $I_{bs-bs}=\sum_{l\neq j}\sum_{k\in\mathcal{K}_d}\frac{P_d}{K_d \gamma_l}|\bm{g}^H_{u,jjn}\bm{V}_{jl}\bm{g}^*_{d,llk}|^2$. The notation of $\sum_{(l,m)\neq (j,n)}$ denotes the summation over all tuples $(l,m)\in\{1,\cdots,L\}\times \mathcal{K}_u\backslash\{(l=j,m=n)\}$.

Similarly,  the downlink ergodic achievable rates of the $k$-th full-duplex user and the $k^\prime$-th half-duplex user in the $l$-th cell are respectively given as 
\begin{gather}
\begin{aligned}
\tilde{R}^{fd,p}_{d,lk}&= \mathbb{E}\left\{\mathrm{log}_2\left(1+\frac{\frac{P_d}{K_d\gamma_l}\norm{\bm{g}_{d,llk}}^4}{\sum_{(j,i)\neq (l,k)}\frac{P_d}{K_d\gamma_j}|\bm{g}^T_{d,jlk}\bm{g}^*_{d,jji}|^2+I_{ue-ue}(k)- P_u|f_{lklk}|^2+\sigma^2+\kappa P_u\beta_{I,lklk}}\right)\right\},\\
\tilde{R}^{fd,p}_{d,lk^\prime}&= \mathbb{E}\left\{\mathrm{log}_2\left(1+\frac{\frac{P_d}{K_d\gamma_l}\norm{\bm{g}_{d,llk^\prime}}^4}{\sum_{(j,i)\neq (l,k^\prime)}\frac{P_d}{K_d\gamma_j}|\bm{g}^T_{d,jlk^\prime}\bm{g}^*_{d,jji}|^2+I_{ue-ue}(k^\prime)+\sigma^2}\right)\right\},\label{ARFDP}
\end{aligned}
\end{gather}
 where $k\in\mathcal{K}_f,k^\prime\in\mathcal{K}_h^d$. $I_{ue-ue}(k)=\sum_{j=1}^L\sum_{n\in\mathcal{K}_u}|P_uf_{lkjn}|^2$. The notation of $\sum_{(j,i)\neq (l,k)}$ denotes the summation over all tuples $(j,i)\in\{1,\cdots,L\}\times \mathcal{K}_d\backslash\{(j=l,i=k)\}$.

\begin{Proposition}\label{prop1}
For perfect CSI, lower bounds on the ergodic achievable rates of multi-cell MU-MIMO full-duplex networks when $M\geq 3$ are given as 
\begin{gather}
\begin{aligned}
\text{Uplink user:}~R^{fd,p}_{u,jn}&= \mathrm{log}_2\bigg(1+\frac{P_u(M-1)\beta_{u,jjn}}{I_{up}+\sigma^2+\kappa P_d\beta_{b,jj}}\bigg),\\
\text{Downlink, FD user:}~R^{fd,p}_{d,lk}&= \mathrm{log}_2\Bigg(1+\frac{\eta_l P_d(M-1)(M-2)\beta^2_{d,llk}}{I_{down}(k)-P_u\beta_{I,lklk}+\sigma^2+\kappa P_u\beta_{I,lklk}}\Bigg),\\
\text{Downlink, HD user:}~R^{fd,p}_{d,lk^\prime}&= \mathrm{log}_2\Bigg(1+\frac{\eta_l P_d(M-1)(M-2)\beta^2_{d,llk^\prime}}{I_{down}(k^\prime)+\sigma^2}\Bigg),\label{eq:prop1}
\end{aligned}
\end{gather}
where $n\in\mathcal{K}_u,~k\in\mathcal{K}_f,~k^\prime\in\mathcal{K}_h^d$, $I_{up}=P_u\sum_{(l,m)\neq (j,n)}\beta_{u,jlm}+P_d\sum_{l\neq j} \beta_{b,jl}$, $I_{down}(k) = \sum_{i\neq k,i\in\mathcal{K}_d}\eta_l P_d\beta_{d,llk}\beta_{d,lli}(M-2)+P_d\sum_{j\neq l}\beta_{d,jlk}+\sum_{j=1}^L\sum_{n\in\mathcal{K}_u}P_u\beta_{I,lkjn}$, and $\eta_l=\frac{1}{M\sum_{i\in\mathcal{K}_d}\beta_{d,lli}}$. 
\end{Proposition}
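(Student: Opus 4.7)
The plan is to lower-bound each of the three ergodic rates in (\ref{ul:lb1}) and (\ref{ARFDP}) via Jensen's inequality applied to the convex map $x \mapsto \log_2(1+1/x)$ on $(0,\infty)$, which gives
\[
\mathbb{E}\!\left[\log_2\!\left(1+\frac{S}{I}\right)\right] \;\geq\; \log_2\!\left(1+\frac{1}{\mathbb{E}[I/S]}\right),
\]
and then to evaluate $\mathbb{E}[I/S]$ in closed form. The computation rests on three elementary moment identities for $g \sim \mathcal{CN}(0,\beta \bm{I}_M)$, namely $\mathbb{E}\|g\|^2 = M\beta$, $\mathbb{E}\|g\|^{-2}=1/((M-1)\beta)$, and $\mathbb{E}\|g\|^{-4}=1/((M-1)(M-2)\beta^2)$, together with the conditional identity $\mathbb{E}\bigl[|g^H x|^2 \mid x\bigr] = \beta\|x\|^2$ whenever $g$ is independent of $x$. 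The hypothesis $M \geq 3$ enters precisely through the fourth-moment formula, needed in the downlink where the signal $\propto \|g_{d,llk}\|^4$ divides the inter-cell contributions.

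For the uplink I would split the interference in (\ref{ul:lb1}) into its three mutually independent pieces (other-user leakage, BS-BS, coloured noise), divide each by $P_u\|g_{u,jjn}\|^4$, and take expectations by conditioning on $g_{u,jjn}$. The other-user piece yields $\beta_{u,jlm}/((M-1)\beta_{u,jjn})$ per interferer and the noise piece is immediate. The BS-BS term $\frac{P_d}{K_d\gamma_l}|g_{u,jjn}^H \bm{V}_{jl} g_{d,llk}^*|^2$ requires a second conditioning on $g_{d,llk}$: from $\mathbb{E}[|g_{u,jjn}^H \bm{V}_{jl} g_{d,llk}^*|^2 \mid g_{u,jjn}, g_{d,llk}] = \beta_{b,jl}\|g_{d,llk}\|^2\|g_{u,jjn}\|^2$, summing over $k \in \mathcal{K}_d$ and substituting $\gamma_l = M\sum_k\beta_{d,llk}/K_d$, the factors of $M$ and $K_d$ cancel and leave exactly $P_d\beta_{b,jl}$ per interfering cell. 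Assembling and factoring out $1/(P_u(M-1)\beta_{u,jjn})$ from $\mathbb{E}[I/S]$ produces the stated $I_{up}$.

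The downlink analysis follows the same template applied to (\ref{ARFDP}) with signal $(P_d/K_d\gamma_l)\|g_{d,llk}\|^4$ in place of $P_u\|g_{u,jjn}\|^4$. Intra-cell interferers ($j=l$, $i\neq k$) use $\mathbb{E}\|g_{d,llk}\|^{-2}$ and, after pulling out the normalization $\eta_l(M-1)(M-2)$, contribute the $(M-2)\beta_{d,llk}\beta_{d,lli}$ piece; inter-cell interferers ($j\neq l$) use $\mathbb{E}\|g_{d,llk}\|^{-4}$ (the place where $M\geq 3$ is needed), and summing over $i$ via $\sum_i\beta_{d,jji}=K_d\gamma_j/M$ telescopes the $j$-th cell contribution to $P_d\beta_{d,jlk}$. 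The UE-UE and thermal-noise terms are independent of $g_{d,llk}$ and pass through unchanged. The only distinction between the FD-user and HD-user bounds is the extra $-P_u\beta_{I,lklk}$ (residual from subtracting the known own-SI signal) and $\kappa P_u\beta_{I,lklk}$ (transmitter noise) appearing in (\ref{eq:dlfd}) but not in (\ref{eq:dlhd}).

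The main technical obstacle I anticipate is the BS-BS interference term, which is a trilinear form $g_{u,jjn}^H \bm{V}_{jl} g_{d,llk}^*$ and thus needs two nested conditioning steps plus a careful cancellation against the precoder normalization $\gamma_l$ to collapse to a clean expression; once that is handled, the remaining contributions reduce to routine moment bookkeeping using the identities above.
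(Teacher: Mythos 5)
Your proposal is correct and follows essentially the same route as the paper's proof: Jensen's inequality on the convex map $x\mapsto\log_2(1+1/x)$, factorization of $\mathbb{E}[I/S]$ via conditioning on the desired channel, the inverse-Wishart moment identities $\mathbb{E}\|g\|^{-2}=1/((M-1)\beta)$ and $\mathbb{E}\|g\|^{-4}=1/((M-1)(M-2)\beta^2)$ (the paper cites Lemma~2.10 of Tulino--Verd\'u for these), and the cancellation of the trilinear BS-BS term against the precoder normalization $\gamma_l$. The only cosmetic difference is that the paper packages the conditioning by defining normalized variables such as $\tilde{v}_{jlk}=\bm{g}^H_{u,jjn}\bm{V}_{jl}\bm{g}^*_{d,llk}/\norm{\bm{g}_{u,jjn}}$, whereas you carry out the nested conditional expectations explicitly; the computations are identical.
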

\begin{proof}
See Appendix~\ref{proof1}.
\end{proof}

\subsection{Imperfect channel state information}
In order to perform up- and downlink beamforming in MU-MIMO networks, the BS needs to know the up- and downlink channels
for uplink coherent detection and downlink precoding. In this section, we will derive the ergodic achievable rates in the presence of channel estimation error. In the full-duplex system, the up- and downlink channels are estimated through uplink training sequences, and thus the pilot overhead is only proportional to the number of users. Simultaneous up-and downlink data transmission starts after uplink training, as shown in Fig.~\ref{training}.
\subsubsection{Pilot-aided channel estimation}
During the uplink training period,  within a coherence interval of $T$, $K_{tot}\triangleq K_u+K_d-K_f$ mutually orthogonal pilot sequences of length $\tau$~($\tau\geq K_{tot}$) symbols are used to estimate the channels between each BS and its associated UEs. The same set of pilot sequences will be reused by $L$ cells. 
The channel estimate will be corrupted by pilot contamination~\cite{marzettaNon10} due to the non-orthogonality of the reused pilots.
We assume that each user has an average channel training power of $P_{tr}$, which is a parameter that depends on the length of the pilot sequences.

The $j$-th BS will correlate the received signal from uplink training with the pilot sequences assigned for the $k$-th user to obtain an $M$-dimensional vector $\bm{y}_{tr,jk}$
\begin{gather}
\bm{y}_{tr,jk}=\bm{g}_{\Phi,jjk}+\sum_{l\neq j}^L\bm{g}_{\Phi,jlk}+\frac{\bm{n}_{jk}}{\sqrt{P_{tr}}}, 
\end{gather}
where $\Phi\in\{u,d\}$, $k\in\mathcal{K}_u\cup\mathcal{K}_h^d$, $\bm{g}_{\Phi, jjk}$ is the $k$-th column of the channel matrix $\bm{G}_{\Phi, jj}$,  and $\bm{n}_{jk}\sim\mathcal{CN}(\bm{0},\sigma^2\bm{I}_M)$.
For the full-duplex users, due to the channel reciprocity, we have $\bm{g}_{u,jji}=\bm{g}_{d,jji}$ for $i\in\mathcal{K}_f$. Hence the estimated uplink channels for full-duplex users can also be used for downlink precoding.
\begin{figure}
\centering
 \includegraphics[width=0.4\textwidth,trim = 60mm
      75mm 70mm 76mm, clip]{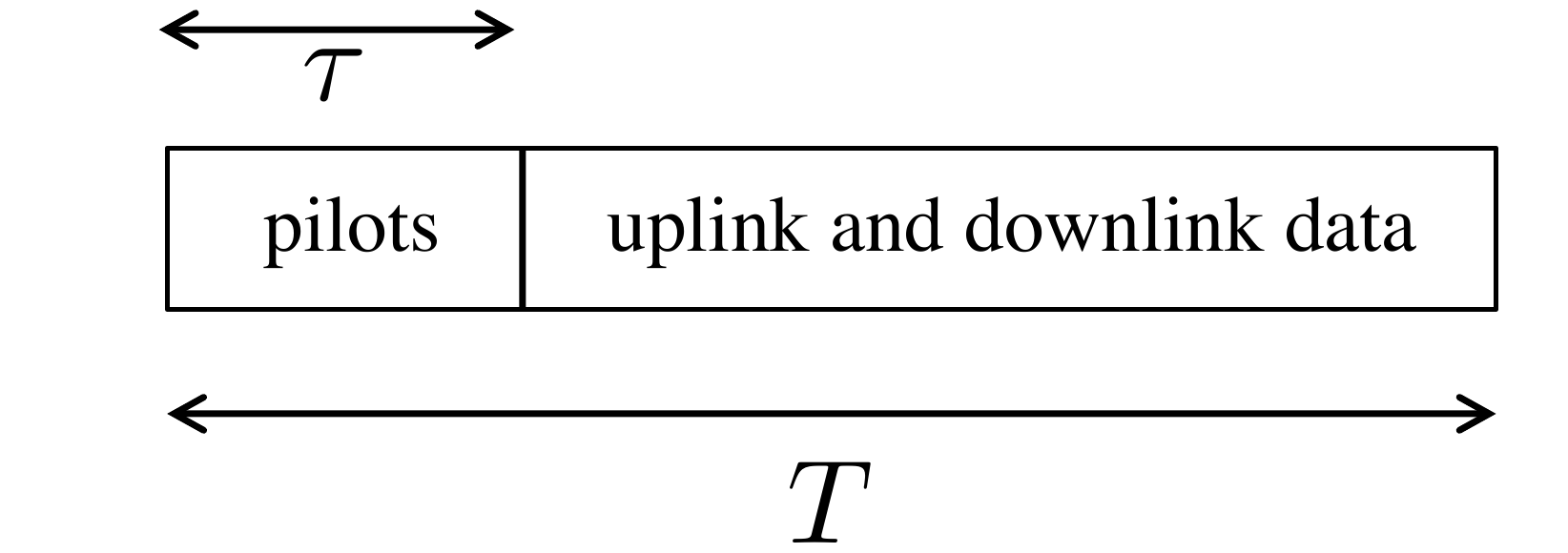}
              \caption{Uplink channel training in full-duplex networks.} \label{training}
    \end{figure}
    
The MMSE channel estimate of the $k$-th user in the $j$-th cell $\bm{g}_{\Phi,jjk}$ can be obtained as~\cite{Hoydis13}
 \begin{gather}
\begin{aligned}
 \hat{\bm{g}}_{\Phi,jjk}=\frac{P_{tr}\beta_{\Phi,jjk}}{\lambda_{\Phi,jk}}\left(\sum_{l=1}^L \bm{g}_{\Phi,jlk}+\frac{\bm{n}_{jk}}{\sqrt{P_{tr}}}\right),\label{eq:mmse}
\end{aligned}
 \end{gather}
where $\lambda_{\Phi,jk}=\sigma^2+P_{tr}\sum_{l=1}^L\beta_{\Phi,jlk}$, and $\hat{\bm{g}}_{\Phi,jjk}\sim\mathcal{CN}\left(\bm{0},\frac{P_{tr}\beta^2_{\Phi,jjk}}{\lambda_{\Phi,jk}}\bm{I}_M\right)$

Due to the orthogonality principle of the MMSE estimator, the true channel can be decomposed as the estimated channel and channel estimation error. Hence we have
 \begin{gather}
\begin{aligned}
\bm{g}_{\Phi,jjk}=\hat{\bm{g}}_{\Phi,jjk}+\bm{\epsilon}_{\Phi,jjk},
\end{aligned}
 \end{gather}
where $\bm{\epsilon}_{\Phi,jjk}\sim\mathcal{CN}\left(\bm{0},\frac{\beta_{\Phi,jjk}\left(\sigma^2+P_{tr}\sum_{l\neq j}\beta_{\Phi,jlk}\right)}{\lambda_{\Phi,jk}}\bm{I}_M\right)$, and the error $\bm{\epsilon}_{\Phi,jjk}$ is independent of the estimate $\hat{\bm{g}}_{\Phi,jjk}$.

\subsubsection{Uplink and downlink data transmission}
In the uplink, the $j$-th BS will apply maximum ratio combining detector by multiplying the conjugate-transpose of channel estimate $\hat{\bm{G}}_{u,jj}^H$ with the received signal 
\begin{eqnarray}
  \bm{r}_{u,j}&=&\hat{\bm{G}}_{u,jj}^H\bm{y}_{u,j}\\
 &=&\hat{\bm{G}}_{u,jj}^H\bm{G}_{u,jj}\bm{x}_{u,j}+\sum_{\mathclap{l\neq j}}\hat{\bm{G}}_{u,jj}^H\bm{G}_{u,jl}\bm{x}_{u,l}+\sum_{l\neq j}\hat{\bm{G}}_{u,jj}^H\bm{V}_{j,l}\bm{x}_{d,l}+\hat{\bm{G}}_{u,jj}^H\bm{z}_{u,j}\\
  &=&\hat{\bm{G}}_{u,jj}^H\hat{\bm{G}}_{u,jj}\bm{x}_{u,j}+\hat{\bm{G}}_{u,jj}^H\bm{E}_{u,jj}\bm{x}_{u,j}+\sum_{\mathclap{l\neq j}}\hat{\bm{G}}_{u,jj}^H\bm{G}_{u,jl}\bm{x}_{u,l}+\sum_{l\neq j}\hat{\bm{G}}_{u,jj}^H\bm{V}_{j,l}\bm{x}_{d,l} \label{eq:ul22}\\
  &&+~\hat{\bm{G}}_{u,jj}^H\bm{z}_{u,j}\nonumber
\end{eqnarray}
where the $k$-th column of $\bm{E}_{u,jj}$ is $\bm{\epsilon}_{u,jjk}$, and (\ref{eq:ul22}) follows from $\bm{G}_{u,jj}=\hat{\bm{G}}_{u,jj}+\bm{E}_{u,jj}$. 

In the downlink, the $l$-th BS will employ conjugate beamforming to precode the downlink messages using the channel estimate $\hat{\bm{G}}_{d,ll}$, and transmit  
an $M\times 1$ signal vector $\bm{x}_{d,l}$, 
\begin{gather}
\begin{aligned}
\bm{x}_{d,l}=\frac{\hat{\bm{G}}^*_{d,ll}}{\sqrt{\tilde{\gamma}_l}}\bm{s}_{d,l}, \label{precode:ip}
\end{aligned}
\end{gather}
where $\tilde{\gamma}_l=\frac{\mathbb{E}(\bm{d}_l^H\hat{\bm{G}}^T_{d,ll}\hat{\bm{G}}^*_{d,ll}\bm{d}_l)}{K_d}=\frac{M}{K_d}\sum_{i=1}^{K_d}\frac{P_{tr}\beta_{d,lli}^2}{\sigma^2+P_{tr}\sum_{j=1}^L\beta_{d,lji}}$.

We assume that the downlink users do not have the channel estimate for reception~(otherwise, additional pilot overhead needs to be considered), but are aware of the channel statistics. Each downlink user can perfectly track the average effective channel gain.
Thus the received downlink signal can be decomposed as an average  effective channel gain times the desired signal symbol, plus a composite term to denote effective noise as in~\cite{Jose11}.

Substituting (\ref{precode:ip}) into (\ref{eq:dlfd}), the received downlink signal at the $k$-th full-duplex user in the $l$-th cell where $k\in\mathcal{K}_f$ can be written as
\begin{gather}
\begin{aligned}
{r}_{d,lk}&=\sqrt{\frac{P_d}{K_d\tilde{\gamma}_l}}\mathbb{E}[\bm{g}^T_{d,llk}\bm{v}_{lk}]d_{l,k}+\sqrt{\frac{P_d}{K_d\tilde{\gamma}_l}}\left(\bm{g}^T_{d,llk}\bm{v}_{lk}-\mathbb{E}[\bm{g}^T_{d,llk}\bm{v}_{lk}]\right)d_{l,k}\\
&+\sum_{(j,i)\neq (l,k)}\sqrt{\frac{P_d}{K_d\tilde{\gamma}_j}}\bm{g}^T_{d,jlk}{\bm{v}}_{ji}d_{j,i}+\sum_{j=1}^L\sum_{n=1}^{K_u}\sqrt{P_u}f_{lkjn}{u}_{j,n}-\sqrt{P_u}f_{lklk}{u}_{l,k}+z_{d,lk}.
\end{aligned}
\end{gather}
 where $\bm{v}_{lk}=\hat{\bm{g}}^*_{d,llk}$ for conjugate beamforming precoder.

Similarly, we substitute (\ref{precode:ip}) into (\ref{eq:dlhd}) to obtain the downlink signal at the $k^\prime$-th half-duplex user in the $l$-th cell where $k^\prime\in\mathcal{K}_h^d$ 
\begin{gather}
\begin{aligned}
{r}_{d,lk^\prime}&=\sqrt{\frac{P_d}{K_d\tilde{\gamma}_l}}\mathbb{E}[\bm{g}^T_{d,llk^\prime}\bm{v}_{lk^\prime}]d_{l,k^\prime}+\sqrt{\frac{P_d}{K_d\tilde{\gamma}_l}}\left(\bm{g}^T_{d,llk^\prime}\bm{v}_{lk^\prime}-\mathbb{E}[\bm{g}^T_{d,llk^\prime}\bm{v}_{lk^\prime}]\right)d_{l,k^\prime}\\
&+\sum_{(j,i)\neq (l,k^\prime)}\sqrt{\frac{P_d}{K_d\tilde{\gamma}_j}}\bm{g}^T_{d,jlk^\prime}{\bm{v}}_{ji}d_{j,i}+\sum_{\mathclap{\substack{j=1}}}^L\sum_{\mathclap{n=1}}^{K_u} \sqrt{P_u}f_{lk^\prime jn}u_{j,n}+{n}_{d,lk^\prime}.
\end{aligned}
\end{gather}

\subsubsection{Ergodic achievable rates}
For the uplink, the channel estimate will be treated as the true channel.
Hence we can obtain the ergodic achievable rate for the $n$-th uplink user in the $j$-th cell as follows:
\begin{gather}
\begin{aligned}
\tilde{R}^{fd,ip}_{u,jn}&= \mathbb{E}\left\{\mathrm{log}_2\left(1+\frac{P_u\norm{\hat{\bm{g}}_{u,jjn}}^4}{P_u|\hat{\bm{g}}_{u,jjn}^H\bm{\epsilon}_{u,jjn}|^2+P_u\sum_{(l,m)\neq (j,n)}|\hat{\bm{g}}^H_{u,jjn}\bm{g}_{u,jlm}|^2+I^{ip}_{bs-bs}+N}\right)\right\}, \label{ulim:lb1}
\end{aligned}
\end{gather}
where $n\in\mathcal{K}_u$, $I^{ip}_{bs-bs}=\sum_{l\neq j}\sum_{k\in\mathcal{K}_d}\frac{P_d}{K_d \tilde{\gamma}_l}|\hat{\bm{g}}^H_{u,jjn}\bm{V}_{jl}\hat{\bm{g}}^*_{d,llk}|^2,~N=\norm{\hat{\bm{g}}_{u,jjn}}^2(\sigma^2+\kappa P_d\beta_{b,jj})$. 

For the downlink, the effective noise is uncorrelated with the signal, using worst-case independent Gaussian noise results in~\cite{Jose11}, the downlink ergodic achievable rates of the $k$-th full-duplex user and the $k^\prime$-th half-duplex user in the $l$-th cell are respectively given as
\begin{gather}
\begin{aligned}
{R}_{d,lk}^{fd,ip}&= \mathrm{log}_2\left(1+\frac{\frac{P_d}{K_d\tilde{\gamma}_l}\left|\mathbb{E}\left[\bm{g}^T_{d,llk}\hat{\bm{g}}^*_{d,llk}\right]\right|^2}{\frac{P_d}{K_d\tilde{\gamma}_l}\mathsf{var}\left[\bm{g}^T_{d,llk}\hat{\bm{g}}^*_{d,llk}\right]+I_{other}(k)- P_u\mathbb{E}\left[\abs{f_{lklk}}^2\right]+\sigma^2+\kappa P_u\beta_{I,lklk}}\right),\\
{R}_{d,lk^\prime}^{fd,ip}&= \mathrm{log}_2\left(1+\frac{\frac{P_d}{K_d\tilde{\gamma}_l}\left|\mathbb{E}\left[\bm{g}^T_{d,llk^\prime}\hat{\bm{g}}^*_{d,llk^\prime}\right]\right|^2}{\frac{P_d}{K_d\tilde{\gamma}_l}\mathsf{var}\left[\bm{g}^T_{d,llk^\prime}\hat{\bm{g}}^*_{d,llk^\prime}\right]+I_{other}(k^\prime)+\sigma^2}\right),\label{dlim:lb1}
\end{aligned}
\end{gather}
where $k\in\mathcal{K}_f,k^\prime\in\mathcal{K}_h^d$. $I_{other}(k)=\sum_{(j,i)\neq (l,k)}\frac{P_d}{K_d\tilde{\gamma}_j}\mathbb{E}\left[\abs{\bm{g}^T_{d,jlk}\hat{\bm{g}}^*_{d,jji}}^2\right]+\sum_{j=1}^L\sum_{n\in\mathcal{K}_u}P_u\mathbb{E}\left[\abs{f_{lkjn}}^2\right]$. The notation $\mathsf{var}[x]\triangleq\mathbb{E}[(x -\mu)(x -\mu)^H],~\mathbb{E}[x]=\mu$.

\begin{Proposition} \label{prop2}
For imperfect CSI with MMSE estimation, the following sets of rates are achievable in multi-cell MU-MIMO full-duplex networks when $M\geq 3$ 
\begin{gather}
\begin{aligned}
\text{Uplink user:}~R^{fd,ip}_{u,jn}&= \mathrm{log}_2\bigg(1+\frac{P_{tr}P_u(M-1)\beta^2_{u,jjn}}{P_u\beta_{u,jjn}(\sigma^2+P_{tr}\sum_{l\neq j}\beta_{u,jln})+\tilde{I}_{up}+\tilde{N}}\bigg),\\
\text{Downlink, FD user:}~R^{fd,ip}_{d,lk}&= \mathrm{log}_2\Bigg(1+\frac{\tilde{\eta}_lP_{tr} P_dM\beta^4_{d,llk}}{\lambda^2_{d,lk}(\tilde{I}_{down}(k)-P_u\beta_{I,lklk}+\sigma^2+\kappa P_u\beta_{I,lklk})}\Bigg),\\
\text{Downlink, HD user:}~R^{fd,ip}_{d,lk^\prime}&= \mathrm{log}_2\Bigg(1+\frac{\tilde{\eta}_lP_{tr} P_dM\beta^4_{d,llk^\prime}}{\lambda^2_{d,lk^\prime}(\tilde{I}_{down}(k^\prime)+\sigma^2)}\Bigg),\label{eq:prop2}
\end{aligned}
\end{gather}
where  $\tilde{I}_{up}=P_{tr}P_u\sum_{l\neq j}\left[(M+1)\beta^2_{u,jln}+\sum_{l_1\neq l}\beta_{u,jl_1 n}\beta_{u,jln}+\frac{\beta_{u,jln}\sigma^2}{P_{tr}}\right]$, $\tilde{N}=\lambda_{u,jn}(P_d\sum_{l\neq j} \beta_{b,jl}+P_u\sum_{l=1}^L\sum_{m\in\mathcal{K}_u,m\neq n}\beta_{u,jlm}+\sigma^2+\kappa P_d\beta_{b,jj})$, $\tilde{I}_{down}(k) = \frac{\tilde{\eta}_l P_d\beta^3_{d,llk}}{\lambda_{d,lk}}+\sum_{j\neq l}\frac{\tilde{\eta}_jP_{tr}P_d(M+1)\beta^2_{d,jlk}\beta^2_{d,jjk}}{\lambda^2_{d,jk}}+\sum_{j\neq l}\frac{\tilde{\eta}_j P_{d}\beta_{d,jjk}^2(\sigma^2+P_{tr}\sum_{l_1\neq l}\beta_{d,jl_1k})\beta_{d,jlk}}{\lambda^2_{d,jk}}+\sum_{j=1}^L\sum_{i\neq k,i\in\mathcal{K}_d}\frac{\tilde{\eta}_jP_d\beta^2_{d,jji}\beta_{d,jlk}}{\lambda_{d,ji}}+\sum_{j=1}^L\sum_{n\in\mathcal{K}_u}P_u\beta_{I,lk jn}$. $\tilde{\eta}_l=\left(\sum_{i\in\mathcal{K}_d}\frac{\beta^2_{d,lli}}{\lambda_{d,li}}\right)^{-1}$, $n\in\mathcal{K}_u,~k\in\mathcal{K}_f,~k^\prime\in\mathcal{K}_h^d.$ 
\end{Proposition}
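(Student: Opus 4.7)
The plan is to follow the same template as the proof of Proposition~\ref{prop1} but insert two extra ingredients at the correct places: the MMSE decomposition $\bm{g}_{\Phi,jjk}=\hat{\bm{g}}_{\Phi,jjk}+\bm{\epsilon}_{\Phi,jjk}$ together with the independence of estimate and error, and a careful accounting of the cross-cell correlations induced by pilot reuse through (\ref{eq:mmse}). Starting from the exact rate expressions (\ref{ulim:lb1}) and (\ref{dlim:lb1}), the derivation splits cleanly into a Jensen-based argument for the uplink and a ``use-and-forget'' (worst-case uncorrelated Gaussian noise) argument for the downlink, in direct parallel with the perfect-CSI case.

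For the uplink, I would apply the standard inequality $\mathbb{E}[\log_2(1+X/Y)]\ge \log_2(1+1/\mathbb{E}[Y/X])$ with $X=P_u\|\hat{\bm{g}}_{u,jjn}\|^4$ and $Y$ equal to the remaining terms in the denominator of (\ref{ulim:lb1}). Because $\hat{\bm{g}}_{u,jjn}\sim\mathcal{CN}(\bm{0},(P_{tr}\beta^2_{u,jjn}/\lambda_{u,jn})\bm{I}_M)$, the identity $\mathbb{E}[1/\|\hat{\bm{g}}_{u,jjn}\|^2]=\lambda_{u,jn}/((M{-}1)P_{tr}\beta^2_{u,jjn})$ reproduces the $(M{-}1)$ factor that appears in the numerator of the claim. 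The estimation-error self-term $\mathbb{E}[|\hat{\bm{g}}^H_{u,jjn}\bm{\epsilon}_{u,jjn}|^2/\|\hat{\bm{g}}_{u,jjn}\|^4]$ is evaluated by conditioning on $\hat{\bm{g}}$ and using $\bm{\epsilon}\perp\hat{\bm{g}}$; interferers $(l,m)$ with $m\neq n$ contribute only independent-channel pieces through $\mathbb{E}[|\hat{\bm{g}}^H\bm{g}_{u,jlm}|^2/\|\hat{\bm{g}}\|^4]=\beta_{u,jlm}\mathbb{E}[1/\|\hat{\bm{g}}\|^2]$, while co-pilot users ($m=n$, $l\neq j$) are correlated with $\hat{\bm{g}}_{u,jjn}$ through (\ref{eq:mmse}) and produce the $(M{+}1)\beta^2_{u,jln}$ term in $\tilde{I}_{up}$. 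The BS-BS interference is handled as in Proposition~\ref{prop1} since $\bm{V}_{jl}$ is independent of $\hat{\bm{g}}_{u,jjn}$.

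For the downlink, the user lacks instantaneous CSI, so I would use the Jose \emph{et al.} technique: decompose $r_{d,lk}$ as a coherent signal proportional to $\mathbb{E}[\bm{g}^T_{d,llk}\hat{\bm{g}}^*_{d,llk}]$ plus zero-mean effective noise, then read off the SINR directly from (\ref{dlim:lb1}). The coherent signal power $|\mathbb{E}[\bm{g}^T_{d,llk}\hat{\bm{g}}^*_{d,llk}]|^2=(MP_{tr}\beta^2_{d,llk}/\lambda_{d,lk})^2$ follows from $\bm{g}=\hat{\bm{g}}+\bm{\epsilon}$ and $\bm{\epsilon}\perp\hat{\bm{g}}$; combined with $P_d/(K_d\tilde{\gamma}_l)=P_d\tilde{\eta}_l/(MP_{tr})$, this yields the claimed numerator $\tilde{\eta}_lP_{tr}P_dM\beta^4_{d,llk}/\lambda^2_{d,lk}$. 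The self-variance collapses via the identity $P_{tr}\beta^2_{d,llk}/\lambda_{d,lk}+\mathrm{Var}(\bm{\epsilon}_{d,llk})=\beta_{d,llk}$ into the leading $\tilde{\eta}_lP_d\beta^3_{d,llk}/\lambda_{d,lk}$ piece of $\tilde{I}_{down}(k)$. For the cross terms $\mathbb{E}[|\bm{g}^T_{d,jlk}\hat{\bm{g}}^*_{d,jji}|^2]$, the case $i\neq k$ yields the independent-channel contribution $M\beta_{d,jlk}P_{tr}\beta^2_{d,jji}/\lambda_{d,ji}$, whereas the pilot-contaminated case $i=k$, $j\neq l$ produces the two-term expression in $\tilde{I}_{down}(k)$ that involves $(M{+}1)P_{tr}\beta^2_{d,jlk}\beta^2_{d,jjk}/\lambda^2_{d,jk}$. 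The UE-UE pieces reduce to $\mathbb{E}[|f_{lkjn}|^2]=\beta_{I,lkjn}$, and dropping the FD self-cancellation term $-P_u\mathbb{E}[|f_{lklk}|^2]$ recovers the half-duplex expression $R^{fd,ip}_{d,lk^\prime}$.

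The main obstacle is the pilot-contamination bookkeeping in the downlink cross terms. By (\ref{eq:mmse}), $\hat{\bm{g}}_{d,jjk}$ is a scalar multiple of $\sum_{l^\prime}\bm{g}_{d,jl^\prime k}+\bm{n}_{jk}/\sqrt{P_{tr}}$, so $\bm{g}^T_{d,jlk}\hat{\bm{g}}^*_{d,jjk}$ contains one ``square'' term $\|\bm{g}_{d,jlk}\|^2$ carrying the non-vanishing fourth-moment factor $M(M{+}1)\beta^2_{d,jlk}$, together with $L{-}1$ independent cross pieces that must be regrouped as $M\beta_{d,jlk}+\lambda_{d,jk}/P_{tr}$ so that the coefficient of $P_{tr}\beta^2_{d,jlk}\beta^2_{d,jjk}/\lambda^2_{d,jk}$ collapses to exactly $(M{+}1)$ as in (\ref{eq:prop2}). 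Once this identification is made, the remainder is routine collection of complex Gaussian moments $\mathbb{E}[\|\bm{x}\|^2]$, $\mathbb{E}[\|\bm{x}\|^4]$, $\mathbb{E}[1/\|\bm{x}\|^2]$, and algebraic simplification using $\lambda_{\Phi,jk}=\sigma^2+P_{tr}\sum_l\beta_{\Phi,jlk}$.
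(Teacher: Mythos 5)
Your proposal is correct and follows essentially the same route as the paper's proof: Jensen's inequality on the uplink with the $(M-1)$ factor from $\mathbb{E}[1/\lVert\hat{\bm{g}}_{u,jjn}\rVert^2]$ and the pilot-contaminated co-pilot terms evaluated via the fourth moment $M(M+1)$ (the paper cites Lemmas 2.9 and 2.10 of Tulino--Verd\'u for these Wishart moments), and the worst-case uncorrelated-noise bound of Jose \emph{et al.} on the downlink with $\mathbb{E}[\mu]$, $\mathsf{var}(\mu)=MP_{tr}\beta^3_{d,llk}/\lambda_{d,lk}$, and the two-case evaluation of $\mathbb{E}\bigl[\lvert\bm{g}^T_{d,jlk}\hat{\bm{g}}^*_{d,jji}\rvert^2\bigr]$. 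All the key moment identities and the normalization $P_d/(K_d\tilde{\gamma}_l)=P_d\tilde{\eta}_l/(MP_{tr})$ match the paper's computation, so the argument goes through as you describe.
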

\begin{proof}
See Appendix~\ref{proof2}.
\end{proof}

 \subsection{TDD baseline system }
We use TDD system as a baseline half-duplex system for comparison since both full-duplex and TDD systems use uplink training for channel estimation.\footnote{FDD system requires downlink training with channel feedback which incurs a much higher pilot overhead as the overhead is not only proportional to the number of users but also the number of BS antennas.} Compared with a full-duplex system, in a TDD system, the uplink does not have BS-BS interference and transmit noise which accounts for imperfect FD radios, and the downlink does not have UE-UE interference and transmit noise. We assume the corresponding TDD system has an uplink and downlink user sets of $\mathcal{K}_u$ and $\mathcal{K}_d$, respectively, with the same total number of uplink and downlink users as in the full-duplex system, i.e., $|\mathcal{K}_u|=K_u$ and $|\mathcal{K}_d|=K_d$. For the TDD baseline system, the up- and downlink transmissions are in two different time slots, and we assume an equal time sharing between up- and downlink transmission.

By treating interference as noise, the up- and downlink ergodic achievable rates in a TDD system under perfect CSI assumption are given as 
\begin{gather}
\begin{aligned}
\text{Uplink user:}~{R}^{tdd,p}_{u,jn}&=\frac{1}{2}\mathbb{E}\left\{\mathrm{log}_2\left(1+\frac{P_u\norm{\bm{g}_{u,jjn}}^4}{P_u\sum_{(l,m)\neq (j,n)}|\bm{g}^H_{u,jjn}\bm{g}_{u,jlm}|^2+\norm{\bm{g}_{u,jjn}}^2\sigma^2}\right)\right\},\\
\text{Downlink user:}{R}^{tdd,p}_{d,lk}&= \frac{1}{2} \mathbb{E}\left\{\mathrm{log}_2\left(1+\frac{\frac{P_d}{K_d\gamma_l}\norm{\bm{g}_{d,llk}}^4}{\sum_{(j,i)\neq (l,k)}\frac{P_d}{K_d\gamma_j}|\bm{g}^T_{d,jlk}\bm{g}^*_{d,jji}|^2+\sigma^2}\right)\right\},
\label{ARHDP}
\end{aligned}
\end{gather}
where $\gamma_l=\frac{{M\sum_{k=1}^{K_d} \beta_{d,llk}}}{K_d}$, $n\in\mathcal{K}_u,~k\in\mathcal{K}_d$. 

When training sequences are used for channel estimation, within a coherence interval $T$, $K_u$ mutually orthogonal pilot sequences of length $\tau_u$ ($\tau_u\geq K_u$) are used for the uplink users and $K_d$ mutually orthogonal pilot sequences of length $\tau_d$ ($\tau_d\geq K_d$) are used for the downlink users. The same sets of training sequences are also reused by all cells. The corresponding up- and downlink ergodic achievable rates with MMSE estimation in TDD system are given below,
\begin{gather}
\begin{aligned}
\text{Uplink user:}~ {R}^{tdd,ip}_{u,jn}&=\frac{1}{2} \mathbb{E}\left\{\mathrm{log}_2\left(1+\frac{P_u\norm{\hat{\bm{g}}_{u,jjn}}^4}{P_u|\hat{\bm{g}}_{u,jjn}^H\bm{\epsilon}_{u,jjn}|^2+P_u\sum_{(l,m)\neq (j,n)}|\hat{\bm{g}}^H_{u,jjn}\bm{g}_{u,jlm}|^2+N^\prime}\right)\right\},\\
\text{Downlink user:}~{R}^{tdd,ip}_{d,lk}&= \frac{1}{2}  \mathbb{E}\left\{\mathrm{log}_2\left(1+\frac{\frac{P_d}{K_d\tilde{\gamma}_l}\left|\mathbb{E}\left[\bm{g}^T_{d,llk}\hat{\bm{g}}^*_{d,llk}\right]\right|^2}{Z^\prime+\sigma^2}\right)\right\},\label{ARHDIP}
\end{aligned}
\end{gather}
where $N^\prime=\norm{\hat{\bm{g}}_{u,jjn}}^2\sigma^2$, $n\in\mathcal{K}_u,~k\in\mathcal{K}_d$, and $Z^\prime=\frac{P_d}{K_d\tilde{\gamma}_l}|\bm{g}^T_{d,llk}\hat{\bm{g}}^*_{d,llk}-\mathbb{E}[\bm{g}^T_{d,llk}\hat{\bm{g}}^*_{d,llk}]|^2+\sum_{(j,i)\neq (l,k)}\frac{P_d}{K_d\tilde{\gamma}_j}|\bm{g}^T_{d,jlk}{\bm{v}}_{ji}d_{j,i}|^2$.

Note that the uplink and downlink ergodic achievable rates in the baseline TDD system follow~\cite{Ngo13} and \cite{Jose11} but without lower bounding the achievable rates, as we will use them to compute the full-duplex versus half-duplex rate ratios in the next section.

\section{Large-Scale full-duplex System Performance} \label{asym}
While the general ergodic achievable rates in the full-duplex system are given in the previous section, it is of interest to study the impact of large antenna arrays at BSs as the next generation of wireless systems will employ significantly more  antennas at the infrastructure nodes~\cite{3gppMassiveMIMO}. In this section,
we will investigate large-scale system performance as the number of full-duplex BS antennas, $M$, becomes arbitrarily large. 
\subsection{Leveraging large antenna arrays for multi-cell interference mitigation}
In this section, we will show that using large BS antenna arrays can mitigate multi-cell interference in the full-duplex system.
With increasing number of BS antennas, the signal strength will become stronger due to beamforming, and the transmit power can be scaled down proportionally to maintain the same quality-of-service. In what follows, we will present two theorems which characterize the asymptotic full-duplex spectral efficiency gain over TDD system under both perfect and imperfect CSI assumptions. 
\begin{theorem}[Asymptotic FD spectral efficiency gain with perfect CSI] \label{prop3}
For perfect CSI, we scale the transmit power of each node proportional to $1/M$ as $P_u=E_u/M$ and $P_d=E_d/M$, where $E_u$ and $E_d$ are fixed. As $M\rightarrow\infty$, the full-duplex spectral efficiency gains over the TDD system in the uplink and downlink, denoted by $\mathsf{Gain}^p_{u}$ and $\mathsf{Gain}^p_{d}$, respectively,  are given below, where fixed asymptotic up- and downlink rates are maintained. 
\begin{gather}
\begin{aligned}
\mathsf{Gain}^{p}_{u}&\triangleq\lim_{M\rightarrow\infty}\frac{\sum_{j=1}^L\sum_{n\in\mathcal{K}_u}R^{fd,p}_{u,jn}}{\sum_{j=1}^L\sum_{n\in\mathcal{K}_u}R_{u,jn}^{tdd,p}}=2,\\
\mathsf{Gain}^p_{d}&\triangleq\lim_{M\rightarrow\infty}\frac{\sum_{l=1}^L\sum_{k\in\mathcal{K}_d}R^{fd,p}_{d,lk}}{\sum_{l=1}^L\sum_{k\in\mathcal{K}_d}R_{d,lk}^{tdd,p}}=2.
\end{aligned}
\end{gather}
The asymptotic ergodic achievable rate of the $n$-th uplink user in the $j$-th cell and the achievable rate of the $k$-th downlink user in the $l$-th cell are given below,
\begin{gather}
\begin{aligned}
R^{fd,p}_{u,jn}&\rightarrow \mathrm{log}_2\left(1+\frac{\beta_{u,jjn}E_u}{\sigma^2}\right), ~n\in\mathcal{K}_u\\
R^{fd,p}_{d,lk}&\rightarrow \mathrm{log}_2\left(1+\frac{\beta^2_{d,llk}E_d}{\sum_{i\in\mathcal{K}_d}\beta_{d,lli}\sigma^2}\right), ~k\in\mathcal{K}_d.
\end{aligned}
\end{gather}
\end{theorem}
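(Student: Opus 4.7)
The plan is to work from the closed-form rate lower bounds already produced in Proposition~\ref{prop1} and the TDD expressions in~(\ref{ARHDP}), substitute the scaling $P_u=E_u/M$, $P_d=E_d/M$, and then push $M\to\infty$ term by term. Because Proposition~\ref{prop1} is already an achievable (in fact deterministic) expression, the limit for the FD rates is a straightforward algebraic computation, while the TDD baseline, being written as an expectation, will require a favorable-propagation / law-of-large-numbers step before a dominated-convergence-type interchange of limit and expectation.

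First I would treat the FD uplink. Inserting the scaling into the numerator of $R^{fd,p}_{u,jn}$ gives $P_u(M-1)\beta_{u,jjn}=\tfrac{M-1}{M}E_u\beta_{u,jjn}\to E_u\beta_{u,jjn}$. Each term in $I_{up}$ carries a factor $P_u$ or $P_d$, both vanishing like $1/M$, so $I_{up}\to 0$; similarly $\kappa P_d\beta_{b,jj}\to 0$. Hence the SINR tends to $E_u\beta_{u,jjn}/\sigma^2$, producing the stated asymptotic uplink rate. The same bookkeeping for the FD downlink exploits the cancellation $\eta_l P_d(M-1)(M-2)=\frac{(M-1)(M-2)}{M^2}\cdot\frac{E_d}{\sum_i \beta_{d,lli}}\to \frac{E_d}{\sum_i\beta_{d,lli}}$, while every surviving term in $I_{down}(k)$, together with the SI-residual terms $-P_u\beta_{I,lklk}$ and $\kappa P_u\beta_{I,lklk}$, is $O(1/M)$ and hence vanishes. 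Because both the FD and HD downlink denominators differ only by $O(P_u)$ terms, the asymptotic expression is identical for users in $\mathcal{K}_f$ and in $\mathcal{K}_h^d$, giving a common downlink limit.

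Next I would handle the TDD baseline in~(\ref{ARHDP}). Write $\|\bm{g}_{u,jjn}\|^2=\beta_{u,jjn}\sum_m|h_{u,jmjn}|^2$; by the strong law, $\tfrac{1}{M}\|\bm{g}_{u,jjn}\|^2\to\beta_{u,jjn}$ a.s., and for independent $(l,m)\ne(j,n)$, $\tfrac{1}{M}|\bm{g}^H_{u,jjn}\bm{g}_{u,jlm}|^2\to \beta_{u,jjn}\beta_{u,jlm}$ in mean (favorable propagation). Substituting $P_u=E_u/M$ makes the interference term in the uplink TDD SINR scale as $(E_u/M)\cdot M\cdot O(1)=O(1)$, whereas numerator and thermal noise both scale as $M$, so the random SINR converges a.s. to $E_u\beta_{u,jjn}/\sigma^2$. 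An analogous large-$M$ analysis on $\|\bm{g}_{d,llk}\|^4$ and the downlink cross-products yields the downlink TDD limit $E_d\beta^2_{d,llk}/(\sigma^2\sum_i\beta_{d,lli})$. To move the $M\to\infty$ through $\mathbb{E}\{\log_2(1+\cdot)\}$ I would verify uniform integrability by upper-bounding the $\log$ by a constant plus $\log(1+\mathrm{SNR}_{\max}\|\bm g\|^4)$, whose expectation is finite and converges by dominated convergence. The factor $\tfrac12$ in the TDD rates survives the limit, so each TDD asymptotic rate equals exactly half of the corresponding FD asymptotic rate.

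Finally, summing over all users in each cell and taking the ratio, the common asymptotic per-user rate cancels and the aggregate ratios $\mathsf{Gain}^p_u$ and $\mathsf{Gain}^p_d$ both equal $2$, as claimed. The only nontrivial step, and the one I would spend the most care on, is the uniform-integrability / dominated-convergence justification for passing the limit inside the expectation in the TDD expressions; everything else is bookkeeping on the deterministic bounds already established in Proposition~\ref{prop1}.
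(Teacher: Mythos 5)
Your proof is correct, and its core mechanism --- substitute $P_u=E_u/M$, $P_d=E_d/M$, invoke the law of large numbers / favorable propagation so that $\frac{1}{M}\bm{G}_{u,jl}^H\bm{G}_{u,jl}\to\bm{D}_{u,jl}\delta_{jl}$, and observe that every interference and residual-SI term vanishes while signal and noise both scale like $M$ --- is exactly the paper's. The paper's two-line proof, however, applies this LLN argument uniformly to the ergodic expressions (\ref{ul:lb1}), (\ref{ARFDP}) and (\ref{ARHDP}), i.e.\ to quantities still sitting inside an expectation, whereas you split the work: for the full-duplex side you take limits of the deterministic closed forms in Proposition~\ref{prop1} (pure algebra on $\eta_l P_d(M-1)(M-2)\to E_d/\sum_i\beta_{d,lli}$ and $I_{up},I_{down}\to 0$), which is actually the route more consistent with the theorem's own notation $R^{fd,p}$, and you reserve the probabilistic argument for the TDD baseline only. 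Your explicit flagging of the uniform-integrability / dominated-convergence step needed to pass $M\to\infty$ through $\mathbb{E}\{\log_2(1+\cdot)\}$ in (\ref{ARHDP}) addresses a point the paper silently skips; this is the one genuinely nontrivial step and you are right to isolate it. One small caution: your statement that $\tfrac{1}{M}|\bm{g}^H_{u,jjn}\bm{g}_{u,jlm}|^2$ converges ``in mean'' is true but not what you ultimately need --- the almost-sure convergence of the SINR follows because $\tfrac{1}{M}\bm{g}^H_{u,jjn}\bm{g}_{u,jlm}\to 0$ a.s.\ makes the interference $o(M)$ while the noise term is $\Theta(M)$, which is the cleaner way to phrase the step you already carry out.
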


\begin{proof}
For two mutually independent $M\times 1$ vectors $\boldsymbol{a}\triangleq [a_1,\cdots,a_M]^T$ and $\boldsymbol{b}\triangleq [b_1,\cdots,b_M]^T$ whose entries are i.i.d. zero-mean random variables with $\mathbb{E}(|a_i|^2=\sigma^2_a)$ and $\mathbb{E}(|b_i|^2=\sigma^2_b),~\forall i\in\{1,\cdots,M\}$, by law of large numbers, we have the following almost sure convergence according to~\cite{Cramer1970},
\begin{gather}
\begin{aligned}
\frac{1}{M}\bm{a}^H\bm{a}\xrightarrow[]{a.s.} \sigma^2_a,~\frac{1}{M}\bm{a}^H\bm{b}\xrightarrow[]{a.s.} 0,~\text{as}~M\rightarrow\infty.\label{lrn1}
\end{aligned}
\end{gather}

Under the favorable propagation condition in \cite{marzettaNon10} where the fast fading channels are i.i.d. with zero mean and unit variance, from (\ref{lrn1}), in the limit of $M$,  we have $\lim_{M\rightarrow\infty}\frac{\bm{G}_{u,jl}^H\bm{G}_{u,jl}}{M}=\bm{D}_{u,jl}\delta_{jl}$, $\lim_{M\rightarrow\infty}\frac{\bm{G}_{d,jl}^H\bm{G}_{d,jl}}{M}=\bm{D}_{d,jl}\delta_{jl}$, where $\bm{D}_{u,jl}$ is a $K_u\times K_u$ diagonal matrix, and each diagonal element is $(\bm{D}_{u,jl})_{n}=\beta_{u,jln}$; $\bm{D}_{d,jl}$ is a $K_d\times K_d$ diagonal matrix, and each diagonal element is $(\bm{D}_{d,jl})_{k}=\beta_{d,jlk}$; $\delta_{jl}=1$~for $j=l$, $\delta_{jl}=0$~for $j\neq l$. 
By substituting $P_u=E_u/M$ and $P_d=E_d/M$ into (\ref{ul:lb1}), (\ref{ARFDP}) and (\ref{ARHDP}), we can obtain the desired results as $M\rightarrow\infty$. 
\end{proof}
\begin{remark}
When a full-duplex BS employs a large antenna array, the transmit power of each node can possibly be scaled down proportionally to $1/M^{C}$ to achieve the same rate. As $M\rightarrow\infty$, $1/M$ is the fastest rate at which we can scale down the transmit power to maintain fixed asymptotic up- and downlink rates. Otherwise,  if $C>1$, the rates will go to zero and if $C<1$, the rates will go to infinity. The full-duplex system preserves the same power scaling law as in the half-duplex system~\cite{Ngo13} despite increased interference in the multi-cell full-duplex networks. Full-duplex system asymptotically doubles the spectral efficiency over TDD system since all $K_u$ uplink streams and $K_d$ downlink streams can be supported in the same time-frequency slot. 
\end{remark}

\begin{theorem}[Asymptotic FD spectral efficiency gain with imperfect CSI]\label{prop4}
For imperfect CSI with MMSE estimation, we scale the power of each node for channel training and data  transmission proportional to $1/\sqrt{M}$ as $P_{tr}=E_{tr}/{\sqrt{M}}$, $P_u=E_u/{\sqrt{M}}$ and $P_d={E_d}/{\sqrt{M}}$, where $E_{tr}$, $E_u$ and $E_d$ are fixed. Within a coherence interval $T$, let $\tau_u=K_u$, $\tau_d=K_d$ and $\tau=K_u+K_d-K_f$.
As $M\rightarrow\infty$, the full-duplex spectral efficiency gains over the TDD system in the uplink and downlink, denoted by $\mathsf{Gain}^{ip}_{u}$ and $\mathsf{Gain}^{ip}_{d}$, respectively,  are given below, where fixed asymptotic up- and downlink rates are maintained. 
\begin{gather}
\begin{aligned}
\mathsf{Gain}^{ip}_{u}&\triangleq\lim_{M\rightarrow\infty}\frac{\frac{T-\tau}{T}\sum_{j=1}^L\sum_{n\in\mathcal{K}_u}R^{fd,ip}_{u,jn}}{\frac{T-\tau_u}{T}\sum_{j=1}^L\sum_{n\in\mathcal{K}_u}R_{u,jn}^{tdd,ip}}=2\Big(1-\frac{K_h^d}{T-K_u}\Big),\\
\mathsf{Gain}^{ip}_{d}&\triangleq\lim_{M\rightarrow\infty}\frac{\frac{T-\tau}{T}\sum_{l=1}^L\sum_{k\in\mathcal{K}_d}R^{fd,ip}_{d,lk}}{\frac{T-\tau_d}{T}\sum_{l=1}^L\sum_{k\in\mathcal{K}_d}R_{d,lk}^{tdd,ip}}=2\Big(1-\frac{K_h^u}{T-K_d}\Big).
\end{aligned}
\end{gather}
The asymptotic ergodic achievable rate of the $n$-th uplink user in the $j$-th cell and the achievable rate of the $k$-th downlink user in the $l$-th cell are given below,
\begin{gather}
\begin{aligned}
R_{u,jn}^{fd,ip}&\rightarrow \mathrm{log}_2\left(1+\frac{E_{tr}E_u\beta^2_{u,jjn}}{E_{tr} E_u\sum_{l\neq j} \beta^2_{u,jln}+\sigma^4}\right),\\
R_{d,lk}^{fd,ip}&\rightarrow \mathrm{log}_2\left(1+\frac{E_{tr}E_d \beta^4_{d,llk}}{Z_l\left(\sum_{j\neq l} \frac{E_{tr}E_d\beta^2_{d,jjk}\beta^2_{d,jlk}}{Z_j}+\sigma^6\right)}\right), 
\end{aligned}
\end{gather}
where $Z_l=\sum_{i\in\mathcal{K}_d}\frac{\beta^2_{d,lli}}{\sigma^2},~n\in\mathcal{K}_u,~k\in\mathcal{K}_d$.
\end{theorem}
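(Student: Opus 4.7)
The plan is to substitute the power scaling $P_{tr}=E_{tr}/\sqrt{M}$, $P_u=E_u/\sqrt{M}$, $P_d=E_d/\sqrt{M}$ directly into the closed-form rate expressions of Proposition~\ref{prop2}, classify each summand by its order in $M$, keep only the $O(1)$ contributions, do the same for the TDD baseline rates in (\ref{ARHDIP}), and finally combine the per-user limits with the pilot-overhead prefactors. The driving heuristic is that $P_{tr}P_u=E_{tr}E_u/M$ exactly cancels the factor $M$ in the useful-signal power and in the pilot-contamination terms, while any residual term carrying a single power factor (or $P_{tr}P_u$ without a companion $M$) decays to zero.

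First I would treat the uplink. The numerator $P_{tr}P_u(M-1)\beta^2_{u,jjn}$ tends to $E_{tr}E_u\beta^2_{u,jjn}$. In the denominator, of the three pieces of $\tilde{I}_{up}$ only the pilot-contamination one $P_{tr}P_u(M+1)\sum_{l\neq j}\beta^2_{u,jln}\to E_{tr}E_u\sum_{l\neq j}\beta^2_{u,jln}$ survives; the remaining cross-term $P_uP_{tr}\sum\beta\beta$ and the $P_u\beta\sigma^2$ term vanish. Using $\lambda_{u,jn}\to\sigma^2$ and noting that every additive term inside $\tilde{N}$ except $\sigma^2$ carries a vanishing power, one gets $\tilde{N}\to\sigma^4$, yielding the claimed limit. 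For the downlink I use $\lambda_{d,li}\to\sigma^2$ and $\tilde{\eta}_l\to 1/Z_l$ with $Z_l=\sum_{i\in\mathcal{K}_d}\beta^2_{d,lli}/\sigma^2$, so $\tilde{\eta}_lP_{tr}P_dM\beta^4_{d,llk}\to E_{tr}E_d\beta^4_{d,llk}/Z_l$. Of the four summands in $\tilde{I}_{down}(k)$ only the pilot-contamination piece $\sum_{j\neq l}\tilde{\eta}_jP_{tr}P_d(M+1)\beta^2_{d,jlk}\beta^2_{d,jjk}/\lambda^2_{d,jk}$ is $O(1)$ and tends to $\sum_{j\neq l}E_{tr}E_d\beta^2_{d,jlk}\beta^2_{d,jjk}/(Z_j\sigma^4)$; the intra-cell cross-user term, the UE--UE interference sum, and the SI-residual corrections $\pm P_u\beta_{I,lklk}$ and $\kappa P_u\beta_{I,lklk}$ all vanish because $P_u\to 0$. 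Multiplying by $\lambda^2_{d,lk}\to\sigma^4$ and adding the surviving $\sigma^2$ gives the denominator $\sum_{j\neq l}E_{tr}E_d\beta^2_{d,jjk}\beta^2_{d,jlk}/Z_j+\sigma^6$; crucially, the FD and HD downlink users acquire the same asymptotic rate since the only terms that distinguish them are $O(P_u)$.

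Finally I would apply the same elimination to the TDD expressions in (\ref{ARHDIP}): the structural difference from the FD case is the absence of BS--BS, UE--UE, and SI--transmit-noise terms, but these are exactly the terms that vanish in the limit, so each TDD per-user asymptotic SINR coincides with the corresponding FD SINR, and the TDD per-user asymptotic rate is exactly half of the FD one because of the explicit $\tfrac{1}{2}$ time-sharing prefactor. Weighting by the pilot-overhead factors $\tfrac{T-\tau}{T}$, $\tfrac{T-\tau_u}{T}$, $\tfrac{T-\tau_d}{T}$ and substituting $\tau=K_u+K_d-K_f$, $\tau_u=K_u$, $\tau_d=K_d$, the ratios collapse algebraically to $\tfrac{2(T-\tau)}{T-\tau_u}=2\bigl(1-K_h^d/(T-K_u)\bigr)$ and $\tfrac{2(T-\tau)}{T-\tau_d}=2\bigl(1-K_h^u/(T-K_d)\bigr)$, using $K_d-K_f=K_h^d$ and $K_u-K_f=K_h^u$. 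The principal obstacle is the bookkeeping: each of $\tilde{I}_{up}$, $\tilde{N}$, $\tilde{I}_{down}(k)$, $\lambda_{\Phi,jk}$, and $\tilde{\eta}_l$ hides a sum whose summands have distinct orders in $M$, $P_{tr}$, $P_u$, and $P_d$, and one must audit them term by term to be sure no $O(1)$ contribution has been dropped on either the FD or the TDD side that would break the per-user factor-of-two correspondence.
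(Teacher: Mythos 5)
Your proposal is correct and follows essentially the same route as the paper: the paper's proof likewise substitutes the $1/\sqrt{M}$ scalings into the rate expressions, invokes the law-of-large-numbers steps from Theorem~\ref{prop3} to identify the surviving pilot-contamination and noise terms, and reduces the gain to the ratio of overhead prefactors $\tfrac{2(T-\tau)}{T-\tau_{u/d}}$. Your term-by-term order audit of $\tilde{I}_{up}$, $\tilde{N}$, $\tilde{I}_{down}(k)$, $\lambda_{\Phi,jk}$ and $\tilde{\eta}_l$ is in fact more explicit than the paper's own argument; the only step you compress is that the TDD rates in (\ref{ARHDIP}) are expectations over random channels, so "the same elimination" there rests on the almost-sure convergence argument of Theorem~\ref{prop3} rather than on deterministic order counting, exactly as the paper does.
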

\begin{proof}
When channel estimation overhead is taken into account, the up- and downlink spectral efficiency in the full-duplex system are $\frac{T-\tau}{T}\sum_{j=1}^L\sum_{n\in\mathcal{K}_u}R^{ip}_{u,jn}$ and $\frac{T-\tau}{T}\sum_{l=1}^L\sum_{k\in\mathcal{K}_d}R^{ip}_{d,lk}$, respectively. While for the TDD system, since uplink and downlink transmission are in two different time slots, the corresponding up- and downlink spectral efficiency are $\frac{T-\tau_u}{T}\sum_{j=1}^L\sum_{n\in\mathcal{K}_u}R_{u,jn}^{tdd,ip}$ and $\frac{T-\tau_d}{T}\sum_{l=1}^L\sum_{k\in\mathcal{K}_d}R_{d,lk}^{tdd,ip}$, respectively. 
Following similar steps used in the proof of Theorem~\ref{prop3}, substituting $P_{tr}=E_{tr}/{\sqrt{M}}$, $P_u=E_u/{\sqrt{M}}$, $P_d={E_d}/{\sqrt{M}}$, $\tau_u=K_u$, $\tau_d=K_d$ and $\tau=K_u+K_d-K_f$ into (\ref{ulim:lb1}), (\ref{dlim:lb1}) and (\ref{ARHDIP}), and using the fact that $K_u=K_h^u+K_f$ and $K_d=K_h^d+K_f$, the desired results can be obtained as $M\rightarrow\infty$.
\end{proof}
\begin{corollary}
When $K_u=K_d=K_f$, i.e., when only full-duplex users are served, full-duplex system asymptotically doubles the spectral efficiency over the TDD system under the imperfect CSI assumption, where $\mathsf{Gain}^{ip}_{u}=\mathsf{Gain}^{ip}_{d}=2$. However, if there exists half-duplex users~(i.e., $K_h^d>0$ or $K_h^u>0$), then $\mathsf{Gain}^{ip}_{u/d}<2$, and the spectral efficiency gains decrease as the number of half-duplex users increases due to channel training overhead.
\end{corollary}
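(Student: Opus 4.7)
The plan is to derive the corollary directly from the closed-form asymptotic gain expressions already established in Theorem~\ref{prop4}, since both halves of the statement are algebraic consequences of substituting user-count identities into those formulas. Recall from Theorem~\ref{prop4} that
\begin{equation*}
\mathsf{Gain}^{ip}_u = 2\left(1 - \frac{K_h^d}{T - K_u}\right), \qquad \mathsf{Gain}^{ip}_d = 2\left(1 - \frac{K_h^u}{T - K_d}\right).
\end{equation*}

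For the first assertion, I would substitute the hypothesis $K_u = K_d = K_f$. Since by definition $K_u = K_f + K_h^u$ and $K_d = K_f + K_h^d$, this hypothesis forces $K_h^u = K_h^d = 0$. Both correction terms in the brackets therefore vanish, which immediately yields $\mathsf{Gain}^{ip}_u = \mathsf{Gain}^{ip}_d = 2$.

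For the second assertion, I would argue that if at least one of $K_h^u$, $K_h^d$ is positive, then the corresponding numerator in the gain formula is a strictly positive integer while the denominator $T - K_u$ (respectively $T - K_d$) is strictly positive; this positivity is inherited from the feasibility condition $\tau = K_u + K_d - K_f \leq T$ underlying Theorem~\ref{prop4}. Hence the bracketed factor is strictly less than $1$, giving $\mathsf{Gain}^{ip}_{u/d} < 2$. For the monotonicity claim, I would observe that (holding $K_f$ fixed) increasing $K_h^u$ strictly increases $K_u$, thereby shrinking the denominator $T - K_u$ in $\mathsf{Gain}^{ip}_u$ and simultaneously enlarging the numerator $K_h^u$ of $\mathsf{Gain}^{ip}_d$; both effects strictly reduce the respective gains, and the parallel argument in $K_h^d$ handles the remaining case.

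The only delicate point is the feasibility check that $T - K_u > 0$ and $T - K_d > 0$, which is required in order for the training length $\tau$ in Theorem~\ref{prop4} to fit inside a coherence block; once this is in place, the corollary reduces to simple fraction arithmetic and there is no real obstacle beyond substitution.
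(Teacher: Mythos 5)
Your proposal is correct and follows the same route the paper intends: the corollary is an immediate algebraic consequence of the closed-form expressions $\mathsf{Gain}^{ip}_{u}=2\bigl(1-\frac{K_h^d}{T-K_u}\bigr)$ and $\mathsf{Gain}^{ip}_{d}=2\bigl(1-\frac{K_h^u}{T-K_d}\bigr)$ from Theorem~2, and you correctly track the cross-coupling (the uplink gain is penalized by $K_h^d$ and the downlink gain by $K_h^u$) as well as the feasibility condition $\tau\leq T$ ensuring positive denominators. The only cosmetic caveat is that in your monotonicity step the effect of increasing $K_h^u$ on $\mathsf{Gain}^{ip}_u$ is strict only when $K_h^d>0$ (otherwise that gain stays pinned at $2$), but this does not affect the validity of the corollary as stated.
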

\begin{remark}
In case of imperfect CSI, the fastest rate at which we can cut down the transmit power to maintain fixed asymptotic up- and downlink rates is $1/\sqrt{M}$.
As the transmit power scale down with increasing $M$, the impact of imperfect SI cancellation, intra-cell and inter-cell interference in the multi-cell MU-MIMO full-duplex networks will vanish as $M\rightarrow\infty$.
\end{remark}

\subsection{System with only full-duplex users}
For tractability, we consider a homogeneous network by assuming all links in the same cell have the same channel statistic and all the cross-cell links have the same channel statistics, i.e., let $\beta_{u,lln}=\beta_{d,llk}=\beta_{I,lkln}=\beta_{b,ll}=1$, $\beta_{u,jln}=\beta_{d,jlk}=\beta_{I,lkjn}=\beta_{b,jl}=\beta$, where $\beta\in[0,1]$ for $j\neq l\in[1,\cdots,L],~n\in\mathcal{K}_u$, $k\in\mathcal{K}_d$, and the noise variance is set as $\sigma^2=1$. We consider a simple scenario where all users are full-duplex and the number of full-duplex users in each cell is $K_f\triangleq K$. Similar results can be derived in case of mixed half-duplex UEs or only half-duplex UEs. With such simplification, we can compute the up- and downlink spectral efficiency per cell (bits/s/Hz/cell) when CSI is perfect as
\begin{gather}
\begin{aligned}
R^{fd,p}_{u}&=K\mathrm{log}_2\left(1+\frac{P_u(M-1)}{P_u(K-1)+(L-1)\beta(P_uK+P_d)+\kappa P_d+1}\right),\\
R^{fd,p}_{d}&=K\mathrm{log}_2\left(1+\frac{P_d(M-1)(M-2)}{P_d(K-1)(M-2)+MK(L-1)\beta P_d+V+MK(\kappa P_u+1)}\right),
\label{homocsi}
\end{aligned}
\end{gather}
where $V=MK\left(K-1+(L-1)K\beta\right)P_u$.

When CSI is imperfect, the up- and downlink spectral efficiency per cell (bits/s/Hz/cell) are 
\begin{gather}
\begin{aligned}
R^{fd,ip}_{u}&\!=\frac{K(T-\tau)}{T}\mathrm{log}_2\left(\!1+\frac{P_{tr} P_u(M-1)}{P_{tr} P_u\left(K\bar{L}^2-1+\beta(\bar{L}-1)M\right)+J}\right),\\
R^{fd,ip}_{d}&\!=\frac{K(T-\tau)}{T}\mathrm{log}_2\left(\!1+\frac{P_{tr}P_dM}{\left(1+P_{tr}\bar{L}\right)U_1+\left(\bar{L}-1\right)U_2}\right),\label{homoerror}
\end{aligned}
\end{gather}
where $J=P_d(1+P_{tr}\bar{L})(\bar{L}-1)+P_uK\bar{L}+P_{tr}(1+ \kappa P_d)\bar{L}+\kappa P_d+1$,\\
$U_1=P_d\left(1+(K-1)\bar{L}\right)+K\left(P_u(K-1)+P_u(\bar{L}-1)K+1+\kappa P_u\right),
$\\
$U_2=P_{tr}P_d\left(M\beta+\bar{L}\right)+P_d,~\bar{L}=1+(L-1)\beta$.

We first evaluate the tightness of our derived bounds on achievable rates in the homogenous networks.
We consider a scenario with $L=7$ cells and the inter-cell interference level $\beta=0.3$, each cell has $K=5$ full-duplex UEs.
The up- and downlink transmit power are assumed as $P_{tr}=P_u=10$~dB and $P_d=20$~dB, respectively. The dynamic range parameter is $\kappa=-50$~dB. In case of imperfect CSI, considering an OFDM system, we assume the coherence time is 1~ms (one subframe in LTE standard where there are 14 OFDM symbols in each subframe), and the ``frequency smoothness interval" is 14 as given in~\cite{marzettaNon10}. Hence the coherence interval which is a time-frequency product is equal to $T=196$. The pilot length is assumed to be the same as the number of users, i.e., $\tau=K$. From Figure~\ref{tight}, we can see that all bounds are very tight in both perfect and imperfect CSI cases, particularly with increasing $M$. Next, we use these bounds for the full-duplex system to compute the uplink and downlink spectral efficiency ratios between full-duplex system and half-duplex system. Note that for the half-duplex system, we still numerically evaluate the ergodic achievable rates with no simplification.
\begin{figure}
    \centering
    \begin{subfigure}[b]{0.65\textwidth}
        \centering
        \includegraphics[width=0.72\textwidth,trim = 10mm
      60mm 15mm 70mm, clip]{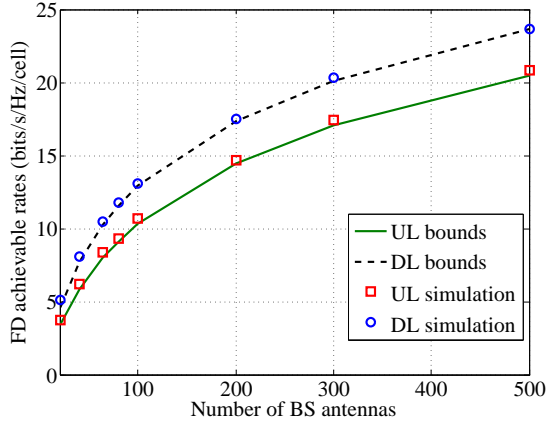}
              \caption{With perfect CSI} 
              \label{tight1}
    \end{subfigure}%
    ~ \hspace{-30mm}
    \begin{subfigure}[b]{0.65\textwidth}
        \centering
        \includegraphics[width=0.68\textwidth,trim = 20mm
      60mm 15mm 70mm, clip]{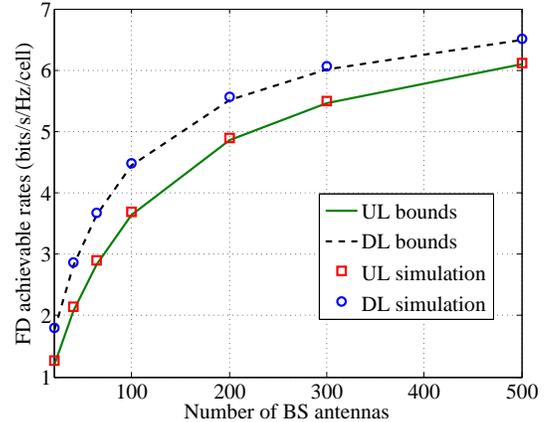}
              \caption{With imperfect CSI} \label{tight2}
    \end{subfigure}
    \caption{Comparisons between lower bounds in Proposition~\ref{prop1} and \ref{prop2} and numerically evaluated values of the ergodic achievable rates~(bits/s/Hz/cell) under both perfect and imperfect CSI assumptions, where $P_{tr}=P_u=10$~dB and $P_d=20$~dB.}\label{tight}
\end{figure}

We compute the spectral efficiency ratios between full-duplex and half-duplex by comparing the full-duplex rates given in (\ref{homocsi}) and (\ref{homoerror}) with the half-duplex achievable rates in (\ref{ARHDP}) and (\ref{ARHDIP}) which can be evaluated numerically. We consider the same setting as in Fig.~\ref{tight} under both perfect and imperfect CSI assumptions.
In Figure~\ref{power1}, in the case of perfect CSI, we can see that as $M$ increases,  the up- and downlink spectral efficiency gain between full-duplex system versus TDD system will converge to 2 as we scale down the transmit power according to Theorem~\ref{prop3}. The convergence rate is fast at the beginning, and becomes slower for large $M$. To reach the asymptotic $2\times$ gain, it will require remarkably large number of BS antennas. However, full-duplex spectral efficiency gains in both uplink and downlink are achievable for finite $M$. For example, when $M=64$, full-duplex achieves about 1.7$\times$ downlink gain and 1.3$\times$ uplink gain. We numerically show that even without scaling down the transmit power, similar full-duplex gains are achievable. Figure~\ref{power2} shows the full-duplex over half-duplex spectral efficiency gains in the case of imperfect CSI. We observe that even with channel estimation error and pilot contamination, full-duplex uplink and downlink gains exist for finite $M$ with and without scaling down the transmit power.
\begin{figure}
    \centering
    \begin{subfigure}[b]{0.6\textwidth}
        \centering
        \includegraphics[width=0.72\textwidth,trim = 10mm
      60mm 15mm 65mm, clip]{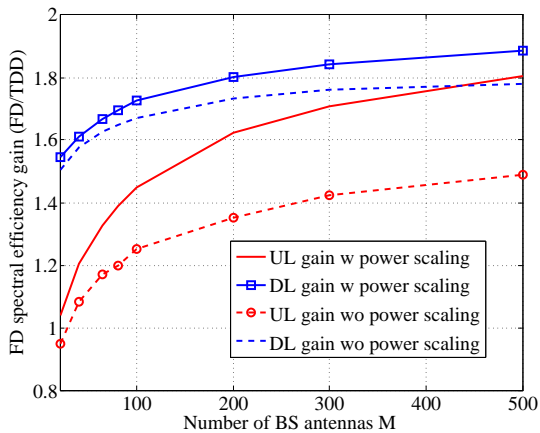}
              \caption{With perfect CSI} \label{power1}
    \end{subfigure}%
    ~ \hspace{-25mm}
    \begin{subfigure}[b]{0.65\textwidth}
        \centering
        \includegraphics[width=0.7\textwidth,trim = 10mm
      63mm 15mm 65mm, clip]{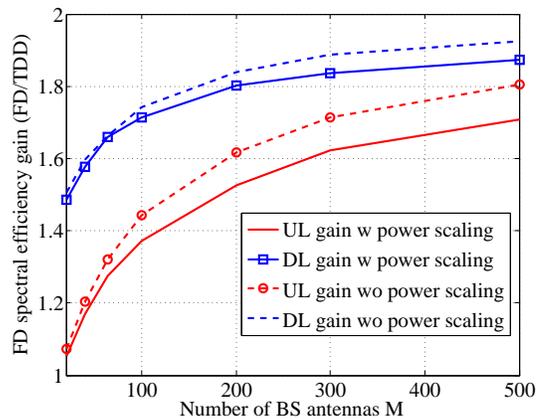}
              \caption{With imperfect CSI} \label{power2}
    \end{subfigure}
    \caption{Spectral efficiency gains of using full-duplex for finite $M$ with and without power scaling.}\label{power scaling}
\end{figure}

We also investigate the spectral efficiency gain and antenna reduction tradeoff between full-duplex system and TDD system. The antenna reduction is the reduction of BS antennas due to full-duplex operation which can be characterized by the ratio between the number of BS antennas in TDD system~($M_{\rm{TDD}}$) and the number of BS antennas in FD system. The spectral efficiency gain and antenna reduction tradeoff is essentially the tradeoff between full-duplex multiplexing gain due to simultaneous transmission and reception and beamforming gain due to large $M$. In Fig.~\ref{ratio}, we consider the same setting as in Fig.~\ref{tight} under imperfect CSI assumption. We illustrate the spectral efficiency gain and antenna reduction tradeoff for both uplink and downlink with different $M_{\rm{TDD}}$ in the TDD system. Larger full-duplex antenna reduction can be achieved at the cost of reducing the spectral efficiency gain. In the regimes above the dashed arrows as shown in Fig.~\ref{ratio}, full-duplex system achieves both spectral efficiency gain and antenna reduction over TDD system. We can see that full-duplex system can require an order of magnitude fewer BS antennas compared with TDD to obtain the same performance in some cases. In addition, as $M_{\rm{TDD}}$ increases, full-duplex system can achieve higher spectral efficiency gain and antenna reduction. 
\begin{figure*}[t!]
        \centering
          \includegraphics[width=0.6\textwidth,trim = 3mm
      60mm 0mm 65mm, clip]{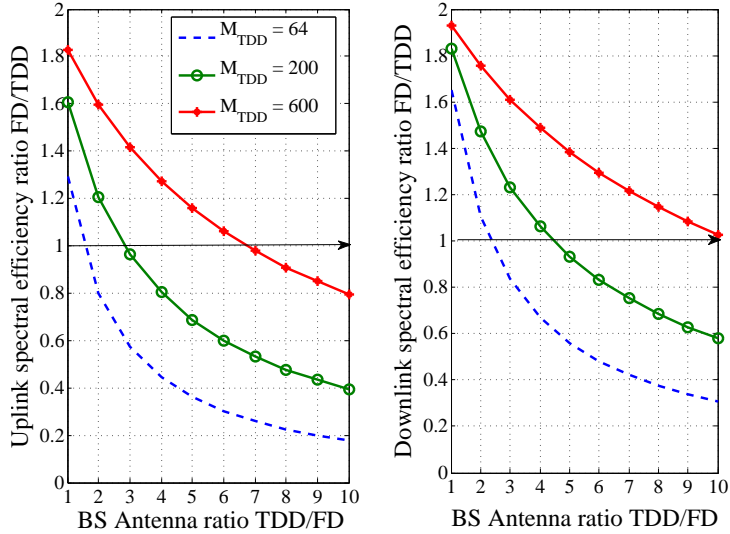}
    \caption{Spectral efficiency gain and antenna reduction tradeoff for various numbers of BS antennas employed in the TDD system under imperfect CSI assumption.}\label{ratio}
\end{figure*}

\section{Numerical Results}\label{numerical}

\subsection{A small cell scenario}
Based on the state-of-art self-interference cancellation capability~\cite{ashu13},  the coverage of a full-duplex system is more likely to be within a small-cell communication ranges. Hence in this section, we present the numerical simulation in realistic small-cell network settings used in 3GPP~\cite{3gppdtdd} to evaluate the system performance.
Twelve small cell BSs are uniformly and randomly distributed within a hexagonal region with a radius of $300$~meters 
All the small-cell BSs have full-duplex capability with multiple antennas. Each small-cell BS is associated with five single-antenna half-duplex uplink UEs and downlink UEs, respectively, which are uniformly and randomly dropped within a radius of 40 meters of the BS. The numerical results are shown for the case of imperfect CSI with channel estimation error. We consider an OFDM system and the coherence interval is $T=196$.
The channel bandwidth is assumed as 20 MHz for both TDD and full-duplex systems. 
The large-scale fading models for BS-UE, BS-BS and UE-UE channels which include path loss and shadowing effect follow 3GPP model in~\cite{3gppdtdd}. The SI channel
model is based on the existing experiment data~\cite{ashu13}, where the propagation loss of SI channel
in a separate-antenna system includes path loss, isolation, cross-polarization and antenna directionality~\cite{evan13}, and in a shared-antenna system includes isolation using a circulator. We assume the SI channel has a propagation loss of $40$~dB. 
We run hundreds of random drops of BSs and UEs in the simulation, and parameter details are given in Tabel~\ref{simulation:table}.
\begin{table}
\centering
\renewcommand{\arraystretch}{1.2}
\scalebox{0.9}{
\begin{tabular}{|l|l|} \hline
\textbf{Parameter}& \textbf{Value}\\ \hline
Full-duplex BS power &$24$~dBm\\ \hline
UE power &$23$~dBm\\ \hline
BS antenna gain &$5$~dBi\\ \hline
Number of BS antennas $M$ &\{$20, 50, 100, 300, 500$\}\\ \hline
Thermal noise density &$-174$~dBm/Hz\\ \hline
Noise Figure  & BS: 9~dB; UE: $5$~dB  \\ \hline
Dynamic range $\kappa^{-1}$  & \{$50, 60, 70,80$\}~dB  \\ \hline
Minimum distance constraints &BS-BS: $40$~m; BS-UE: $10$~m; UE-UE: $3$~m  \\ \hline
Shadowing standard deviation & BS-UE: $10$~dB; BS-BS: $12$~dB; UE-UE: $6$~dB \\ \hline
Pathloss models for BS-UE, UE-UE,& Refer to~\cite{3gppdtdd} \\ 
 and BS-BS channels&\\ \hline
 Fast fading channels & i.i.d.~$\mathcal{CN}(0,1)$\\ \hline
 Propagation loss of self-interference channels&$40$~dB~\cite{ashu13}\\ \hline
\end{tabular}}
  \vspace{-1mm}
\caption{Simulation parameters}
\label{simulation:table}
\end{table}

The average full-duplex spectral efficiency gain is illustrated in Fig.~\ref{simulation:results1} with varying  numbers of BS antennas, where the dynamic range parameter $\kappa$ is $-60$~dB. We verify that the full-duplex gains in realistic network scenarios exist for a range of finite number of BS antennas and the gains will scale with BS antennas. Fig~\ref{results1} depicts the average spectral efficiency gain of the full-duplex system for different dynamic range parameters~$\kappa$ when $M=100$,  which demonstrates the impact of imperfect SI cancellation. Since all users are half-duplex, the downlink gains are not affected by $\kappa$. However, since all the BSs are full-duplex, the uplink gains are severely affected by the dynamic range values especially when the dynamic range value is low. We can see that larger dynamic range $\kappa^{-1}$, (i.e., smaller $\kappa$) will result in less residual SI, thus increasing the uplink gain. Once the dynamic range $\kappa^{-1}$ exceeds certain threshold, there is not much impact of the residual SI on the uplink performance.
\begin{figure*}[t!]
        \centering
        \includegraphics[width=0.5\textwidth,trim = 15mm
      65mm 20mm 65mm,clip]{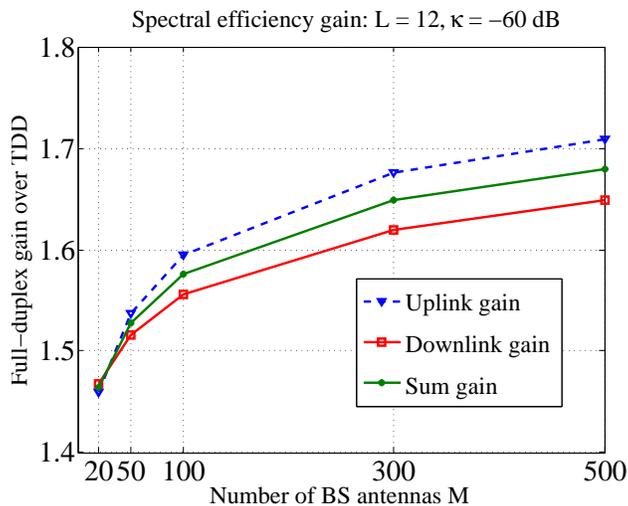} 
    \caption{Average full-duplex spectral efficiency gain with BS antenna arrays when $L=12$ and $\kappa=-60$. } \label{simulation:results1}
\end{figure*}

\begin{figure*}[t!]
        \centering
        \includegraphics[width=0.48\textwidth,trim = 15mm
      65mm 20mm 65mm,clip]{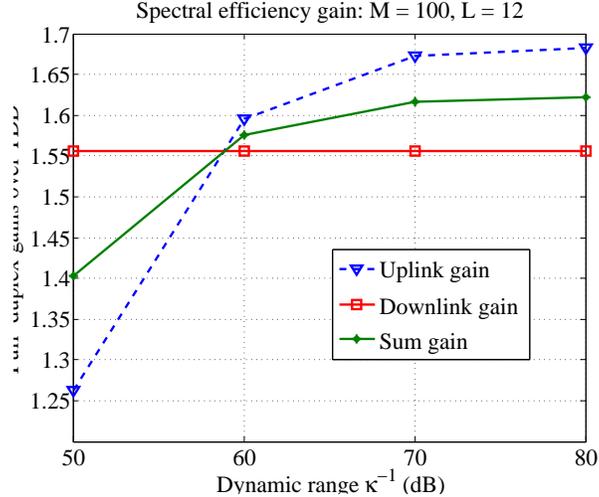} 
    \caption{Average full-duplex spectral efficiency gain with different dynamic ranges when $M=100$ and $L=12$. } \label{results1}
\end{figure*}



\section{Conclusion}\label{conclusion}
In this paper, we have investigated the multi-cell MU-MIMO full-duplex networks where single-antenna full-duplex and half-duplex UEs are served by the full-duplex BSs with multiple antennas. Using low complexity linear receivers and precoders, the ergodic achievable rates of uplink and downlink are characterized for the full-duplex system. Several practical constraints are modeled in the analysis such as imperfect full-duplex radio chains, channel estimation error, pilot overhead and pilot contamination. The large scale system performance is analyzed when each BS has a large antenna array. When the number of BS antennas grows infinitely large, the 2$\times$ asymptotic full-duplex spectral efficiency gain can be achieved over the TDD system with perfect CSI. When channel estimation error and channel training overhead are considered, the 2$\times$ asymptotic full-duplex spectral efficiency gain is achieved when serving only full-duplex UEs. Furthermore, we show by numerical simulation that full-duplex system can achieve both spectral efficiency gain and antenna reduction as the full-duplex system can require one order of magnitude fewer antennas than TDD to achieve the same or better performance in certain cases.

\appendix
\subsection{Proof of Proposition~\ref{prop1}}\label{proof1}
Based on Jensen's inequality and the convexity of $\mathrm{log}_2\left(1+x^{-1}\right)$, we have $\mathbb{E}\left[\mathrm{log}_2\left(1+x^{-1}\right)\right]\geq \mathrm{log}_2\left(1+\mathbb{E}^{-1}(x)\right)$. Hence we can lower bound the ergodic achievable rate of the $n$-th uplink UE in the $j$-th cell in~(\ref{ul:lb1}) as
\begin{gather}
\begin{aligned}
&\tilde{R}^{fd,p}_{u,jn}\geq R^{fd,p}_{u,jn}\triangleq \mathrm{log}_2\Bigg(1+\\
&\mathbb{E}^{-1}\left[\frac{P_u\sum_{(l,m)\neq (j,n)}|\bm{g}^H_{u,jjn}\bm{g}_{u,jlm}|^2+I_{bs-bs}+\norm{\bm{g}_{u,jjn}}^2(\sigma^2+\kappa P_d\beta_{b,jj})}{P_u\norm{\bm{g}_{u,jjn}}^4}\right]\Bigg)\\
&=\mathrm{log}_2\left(1+\mathbb{E}^{-1}\left[\frac{P_u\sum_{(l,m)\neq (j,n)}|\tilde{g}_{u,jlm}|^2+\sum_{l\neq j}\sum_{k\in\mathcal{K}_d}\frac{P_d}{K_d \gamma_l}|\tilde{v}_{jlk}|^2+\sigma^2+\kappa P_d\beta_{b,jj}}{P_u\norm{\bm{g}_{u,jjn}}^2}\right]\right), \label{ul:lb2}
\end{aligned}
\end{gather}
where $\tilde{g}_{u,jlm}\triangleq \frac{\bm{g}^H_{u,jjn}\bm{g}_{u,jlm}}{\norm{\bm{g}_{u,jjn}}}$, $\tilde{v}_{jlk}\triangleq \frac{\bm{g}^H_{u,jjn}\bm{V}_{jl}\bm{g}^*_{d,llk}}{\norm{\bm{g}_{u,jjn}}}$. Conditioned on $\bm{g}_{u,jjn}$, $\tilde{g}_{u,jlm}\sim\mathcal{CN}(0,\beta_{u,jlm})$ and $\tilde{v}_{jlk}\sim\mathcal{CN}(0,M\beta_{b,jl}\beta_{d,llk})$ are independent of $\bm{g}_{u,jjn}$. Thus we have
\begin{gather}
\begin{aligned}
&\mathbb{E}\left[\frac{P_u\sum_{(l,m)\neq (j,n)}|\tilde{g}_{u,jlm}|^2+\sum_{l\neq j}\sum_{k\in\mathcal{K}_d}\frac{P_d}{K_d \gamma_l}|\tilde{v}_{jlk}|^2+\sigma^2+\kappa P_d\beta_{b,jj}}{P_u\norm{\bm{g}_{u,jjn}}^2}\right]\\
&=\left(\sum_{(l,m)\neq (j,n)}P_u\mathbb{E}[|\tilde{g}_{u,jlm}|^2]+\sum_{l\neq j}\sum_{k\in\mathcal{K}_d}\frac{P_d}{K_d \gamma_l}\mathbb{E}[|\tilde{v}_{jlk}|^2]+\sigma^2+\kappa P_d\beta_{b,jj}\right)\mathbb{E}\left[\frac{1}{P_u\norm{\bm{g}_{u,jjn}}^2}\right]. \label{ul:lb3}
\end{aligned}
\end{gather}

From Lemma~2.10 in \cite{tulino2004random}, for a central Wishart matrix $\bm{W}\sim\mathcal{W}_m(n,\bm{I})$ with $n\geq m$, we have 
$\mathbb{E}[\mathsf{tr}\{\bm{W}^{-1}\}]=\frac{m}{n-m}$ for $n>m$. Hence 
\begin{gather}
\begin{aligned}
\mathbb{E}\left[\frac{1}{P_u\norm{\bm{g}_{u,jjn}}^2}\right]=\frac{1}{(M-1)P_u\beta_{u,jjn}}~\text{for}~M\geq 2. \label{ul:lb4}
\end{aligned}
\end{gather}
Combing (\ref{ul:lb2}), (\ref{ul:lb3}) and (\ref{ul:lb4}), we can obtain the uplink achievable rate in (\ref{eq:prop1}).

To derive the lower bound on the downlink achievable rate, we need to evoke the Lemma~2.10 in \cite{tulino2004random} again where $\mathbb{E}[\mathsf{tr}\{\bm{W}^{-2}\}]=\frac{mn}{(n-m)^3-(n-m)}$ for $n>m+1$. Thus we have 
\begin{gather}
\begin{aligned}
\mathbb{E}\left[\frac{1}{\norm{\bm{g}_{d,llk}}^4}\right]=\frac{1}{(M-1)(M-2)\beta_{d,llk}^2}~\text{for}~M\geq 3. 
\end{aligned}
\end{gather}
Following the same approach used in the derivation of uplink rate, we can obtain the achievable rate of each downlink user given in proposition~\ref{prop1}. The details are omitted to avoid redundancy.

\subsection{Proof of Proposition~\ref{prop2}}\label{proof2}
Similar to the proof of Proposition~\ref{prop1}, applying Jensen's inequality, we can lower bound the ergodic achievable rate of the $n$-th uplink user in the $j$-th cell in~(\ref{ulim:lb1}) as
\begin{gather}
\begin{aligned}
&\tilde{R}^{fd,ip}_{u,jn}\geq R^{fd,ip}_{u,jn}\triangleq \mathrm{log}_2\Bigg(1+\\
&\mathbb{E}^{-1}\left[\frac{P_u\abs{\tilde{\epsilon}_{jjn}}^2+P_u\sum_{(l,m)\neq (j,n)}|\tilde{\tilde{g}}_{u,jlm}|^2+\sum_{l\neq j}\sum_{k\in\mathcal{K}_d}\frac{P_d}{K_d \gamma_l}|\tilde{\tilde{v}}_{jlk}|^2+\sigma^2+\kappa P_d\beta_{b,jj}}{P_u\norm{\hat{\bm{g}}_{u,jjn}}^2}\right]\Bigg)\label{ulim:lb2}
\end{aligned}
\end{gather}
where $\tilde{\epsilon}_{jjn}\triangleq \frac{\hat{\bm{g}}^H_{u,jjn}\bm{\epsilon}_{u,jjn}}{\norm{\hat{\bm{g}}_{u,jjn}}}$, $\tilde{\tilde{g}}_{u,jlm}\triangleq \frac{\hat{\bm{g}}^H_{u,jjn}\bm{g}_{u,jlm}}{\norm{\hat{\bm{g}}_{u,jjn}}}$, $\tilde{\tilde{v}}_{jlk}\triangleq \frac{\hat{\bm{g}}^H_{u,jjn}\bm{V}_{jl}\hat{\bm{g}}^*_{d,llk}}{\norm{\bm{g}_{u,jjn}}}$. Conditioned on $\hat{\bm{g}}_{u,jjn}$, $\tilde{\epsilon}_{jjn}\sim\mathcal{CN}\left(0,\frac{\beta_{u,jjk}\left(\sigma^2+P_{tr}\sum_{l\neq j}\beta_{u,jlk}\right)}{\lambda_{u,jk}}\right)$, $\tilde{\tilde{v}}_{jlk}\sim\mathcal{CN}\left(0,M\beta_{b,jl}\frac{P_{tr}\beta^2_{d,llk}}{\lambda_{d,lk}}\right)$ and $\tilde{\tilde{g}}_{u,jlm}\sim(0,\mathrm{var}(\tilde{\tilde{g}}_{u,jlm}))$ are independent of $\hat{\bm{g}}_{u,jjn}$.
Using (\ref{eq:mmse}), we have
\begin{gather}
\begin{aligned}
\tilde{\tilde{g}}_{u,jlm}=\frac{P_{tr}\beta_{u,jjn}}{\lambda_{u,jn}}\left(\sum_{l_1=1}^L \bm{g}^H_{u,jl_1n}+\frac{\bm{n}^H_{jn}}{\sqrt{P_{tr}}}\right)\frac{\bm{g}_{u,jlm}}{\norm{\hat{\bm{g}}_{u,jjn}}}.
\end{aligned}
\end{gather}
Evoking Lemma~2.9 in \cite{tulino2004random}, where for a central Wishart matrix $\bm{W}\sim\mathcal{W}_m(n,\bm{I})$ with $n\geq m$, $\mathbb{E}[\mathsf{tr}\{\bm{W}^{2}\}]=mn(m+n)$, we can calculate the variance of $\tilde{\tilde{g}}_{u,jlm}$ as
\begin{gather}
\begin{aligned}
\mathbb{E}[\abs{\tilde{\tilde{g}}_{u,jlm}}^2]= 
  \begin{cases}
  \frac{P_{tr}}{\lambda_{u,jn}}\left[(M+1)\beta^2_{u,jln}+\sum_{l_1\neq l}\beta_{u,jl_1 n}\beta_{u,jln}+\frac{\beta_{u,jln}\sigma^2}{P_{tr}}\right] &  ~~m=n,l\neq j\\
  \beta_{u,jlm} &~~ m\neq n.
   \end{cases}
\label{ulim:lb3}
\end{aligned}
\end{gather}
Now we can rewrite the expectation in (\ref{ulim:lb2}) as
\begin{gather}
\begin{aligned}
&\mathbb{E}\left[\frac{P_u\abs{\tilde{\epsilon}_{jjn}}^2+P_u\sum_{(l,m)\neq (j,n)}|\tilde{\tilde{g}}_{u,jlm}|^2+\sum_{l\neq j}\sum_{k\in\mathcal{K}_d}\frac{P_d}{K_d \gamma_l}|\tilde{\tilde{v}}_{jlk}|^2+\sigma^2+\kappa P_d\beta_{b,jj}}{P_u\norm{\hat{\bm{g}}_{u,jjn}}^2}\right]\\
&=\left(P_u\mathbb{E}[|\tilde{\epsilon}_{jjn}|^2]+P_u\sum_{\mathclap{(l,m)\neq (j,n)}}\mathbb{E}[|\tilde{\tilde{g}}_{u,jlm}|^2]+\sum_{l\neq j}\sum_{k\in\mathcal{K}_d}\frac{P_d}{K_d \gamma_l}\mathbb{E}[|\tilde{\tilde{v}}_{jlk}|^2]+\sigma^2+\kappa P_d\beta_{b,jj}\right)\mathbb{E}\left[\frac{1}{P_u\norm{\hat{\bm{g}}_{u,jjn}}^2}\right]. \label{ulim:lb4}
\end{aligned}
\end{gather}
Combing (\ref{ulim:lb2}), (\ref{ulim:lb3}) and (\ref{ulim:lb4}), we can obtain the uplink achievable rate in (\ref{eq:prop2}).

For the downlink achievable rate in (\ref{dlim:lb1}), we first compute $\mathbb{E}\left[\bm{g}^T_{d,llk}\hat{\bm{g}}^*_{d,llk}\right]$. Let $\mu=\bm{g}^T_{d,llk}\hat{\bm{g}}^*_{d,llk}$, since $\bm{g}_{d,llk}=\hat{\bm{g}}_{d,llk}+\bm{\epsilon}_{d,llk}$ and $\hat{\boldsymbol{g}}_{d,llk}$ is independently of $\boldsymbol{\epsilon}_{d,llk}$, we have 
\begin{gather}
\begin{aligned}
\mathbb{E}[\mu]&=\mathbb{E}\left[(\hat{\bm{g}}^T_{d,llk}+\bm{\epsilon}^T_{d,llk})\hat{\bm{g}}^*_{d,llk}\right]\\
&=\mathbb{E}\left[\norm{\hat{\bm{g}}_{d,llk}}^2\right]=\frac{MP_{tr}\beta^2_{d,llk}}{\lambda_{d,lk}}.
\end{aligned}
\end{gather}
Again invoking  Lemma~2.9 in \cite{tulino2004random}, we have
\begin{gather}
\begin{aligned}
\mathbb{E}\left[\mu^2\right]&=\mathbb{E}\left[\norm{\hat{\bm{g}}_{d,llk}}^4\right]+\mathbb{E}\left[\bm{\epsilon}^T_{d,llk}\hat{\bm{g}}^*_{d,llk}\hat{\bm{g}}^T_{d,llk}\bm{\epsilon}^*_{d,llk}\right]\\
&=\frac{M(M+1)P^2_{tr}\beta^4_{d,llk}+MP_{tr}\beta^3_{d,llk}(\sigma^2+P_{tr}\sum_{j\neq l}\beta_{d,ljk})}{\lambda_{d,lk}^2}.
\end{aligned}
\end{gather}

Since $\mathsf{var}\left[\bm{g}^T_{d,llk}\hat{\bm{g}}^*_{d,llk}\right]\triangleq\mathsf{var}(\mu)=\mathbb{E}(\mu^2)-\mathbb{E}^2(\mu)$, we can obtain that
\begin{gather}
\begin{aligned}
\mathsf{var}(\mu)&=\frac{MP_{tr}\beta^3_{d,llk}}{\lambda_{d,lk}}.
\end{aligned}
\end{gather}

Next, we compute $\mathbb{E}\left[\abs{\bm{g}^T_{d,jlk}\hat{\bm{g}}^*_{d,jji}}^2\right]$ as
\begin{gather}
\begin{aligned}
\mathbb{E}\left[\abs{\bm{g}^T_{d,jlk}\hat{\bm{g}}^*_{d,jji}}^2\right]= 
  \begin{cases}
  \frac{MP^2_{tr}\beta^2_{d,jjk}}{\lambda^2_{d,jk}}\!\!\left[(M+1)\beta^2_{d,jlk}+\sum_{l_1\neq l}\beta_{d,jl_1 k}\beta_{d,jlk}+\frac{\beta_{d,jlk}\sigma^2}{P_{tr}}\right] &  ~~i=k,j\neq l\\
   \frac{MP_{tr}\beta^2_{d,jji}\beta_{d,jlk}}{\lambda_{d,ji}} &~~ i\neq k.
   \end{cases}
\end{aligned}
\end{gather}
The rest terms in (\ref{dlim:lb1}) can be calculated easily, and thus the details are omitted. Combing all the results above, we can obtain downlink achievable rate given in proposition~\ref{prop2}.

\bibliographystyle{IEEEtran}
\scriptsize
\bibliography{Reference}

\end{document}